\numberwithin{equation}{section}
\newtheorem{theorem}{Theorem}[section]
\newtheorem{lemma}{Lemma}[theorem]
\newtheorem{proposition}{Proposition}[theorem]
\newtheorem{remark}{Remark}[theorem]
\newcommand{\wt}[1]{\widetilde{#1}}
\newcommand{\wh}[1]{\widehat{#1}}
\newcommand{\mbf}[1]{\mathbf{#1}}
\newcommand{\mc}[1]{\mathcal{#1}}
\newcommand{\bs}[1]{\boldsymbol{#1}}
\newcommand{\dd}{\mathrm{~d}}
\newcommand{\dt}{\, \mathrm{d}t}
\newcommand{\E}[1]{\mathbb{E}\left[{#1}\right]}
\def\p{{\partial}}
\def\bk{{\mathbf{k}}}
\def\bu{{\mathbf{u}}}
\def\bx{{\mathbf{x}}}
\newcommand{\expect}{\mathbb{E}}
\def\p{\partial}
\begin{document}
\title{Deterministic and Stochastic Euler-Boussinesq Convection}
\author{\bigskip\Large
Darryl D. Holm and Wei Pan 
\\ \bigskip\small
Mathematics, Imperial College London SW7 2AZ, UK\\
\small Email: d.holm@ic.ac.uk and wei.pan@imperial.ac.uk}

\date{Fondly remembering Charlie Doering's ``obsession with convection''.}

\maketitle
%\section{}
%\subsection{}

\makeatother

\begin{abstract}
Stochastic parametrisations of the interactions among disparate scales of motion in fluid convection are often used for estimating prediction uncertainty, which can arise due to inadequate model resolution, or incomplete observations, especially in dealing with atmosphere and ocean dynamics, where viscous and diffusive dissipation effects are negligible. 
This paper derives a family of three different types of stochastic parameterisations for the classical Euler-Boussinesq (EBC) equations for a buoyant incompressible fluid flowing under gravity in a vertical plane. These three stochastic models are inspired by earlier work on the effects of stochastic fluctuations on transport, see, e.g.,  \cite{Kraichnan1968,Kraichnan1994,Doering1994}. They are derived here from variants of Hamilton's principle for the deterministic case when Stratonovich noise is introduced. The three models possess different variants of their corresponding Hamiltonian structures. One variant (SALT) introduces stochastic transport. Another variant (SFLT) introduces stochastic \emph{forcing}, rather than stochastic \emph{transport}. The third variant (LA SALT) introduces nonlocality in its stochastic transport, in the probabilistic sense of McKean \cite{McKean1966}. 
\end{abstract}

\tableofcontents

\color{black}

\section{Introduction}

\subsection{Story line}
To derive the three stochastic parametrisations of EBC treated here, we first obtain the Lie-Poisson Hamiltonian formulation of the deterministic EBC equations, in which all dimension-free ratios of units can be conveniently absorbed into the temperature, or buoyancy, variable. We then introduce stochasticity into the EBC equations in three different ways which all preserve their Hamiltonian structure and their Kelvin circulation theorem. The basis of our approach is SALT (Stochastic Advection by Lie Transport) \cite{Holm2015}, a method that has recently been applied successfully in \cite{cotter2018modelling, cotter2019numerically} as an algorithm for stochastically calibrating data, dynamically estimating uncertainty and using particle filtering methods for data assimilation to reduce uncertainty in confined 2D Euler flows, in quasigeostrophic channel flows and in thermal quasigeostrophic flows.  The SALT algorithm aims to quantify the uncertainty that arises, e.g., in the process of up-scaling, or coarse-graining of either observed or synthetic data at fine scales, for use in computational simulations at coarser scales. The SALT approach also generalises to accommodate coloured noise, fractional Brownian motion, and other subsets of the geometric rough paths, \cite{CHLN2022}.      

The paper begins by discussing the initial value problem for convective adjustment of an \emph{ideal} stratified fluid out of equilibrium in a vertical plane, under the influence of gravity, in the absence of dissipation by viscosity and thermal diffusivity. The motion is described by the planar Euler-Boussinesq convection (EBC) equations in the vorticity representation. This is a non-dissipative, initial-value version of the dissipative Oberbeck-Boussinesq convection studied in \cite{Farhat-etal-RBCdata2020}. The EBC equations may be derived from Hamilton's variational principle. The corresponding Hamiltonian structure of the EBC model may be used, for example, to characterise the equilibrium solutions of the non-dissipative system as critical points of the sum of the Hamiltonian on level sets of the conserved quantities and thereby derive the corresponding Hamiltonian Taylor-Goldstein equations governing their linear dynamics. 

Following earlier treatments of the effects of fluctuations on transport in physics \cite{Kraichnan1968,Kraichnan1994,Doering1994} as surveyed in \cite{Frisch1995}, and in mathematics as surveyed in \cite{FlandoliPappa2022}, we introduce \emph{stochastic advection by Lie transport} (SALT) in the Hamiltonian framework for EBC, thereby recovering the stochastic EBC equations studied in \cite{Alonso-Oran-etal-JStaPhys,Holm2015} which were derived from the Euler-Poincar\'e variational approach. Following the approach of \cite{cotter2018modelling,cotter2019numerically} as developed for quasigeostrophic channel flow, the SALT EBC equations enable uncertainty quantification and also admit data assimilation methods based on particle filtering that can reduce the uncertainty in coarse-grained computational simulations of convective adjustment.%
\footnote{Data calibration and assimilation methods for SALT EBC will be discussed elsewhere.} 
Next, we discuss two variants of SALT for the stochastic EBC initial value problem. These are: (1) the method of stochastic \emph{forcing} by Lie transport (SFLT); and (2) the \emph{Lagrangian Averaged} SALT (LA SALT) equations. 

The SFLT and LA SALT approaches yield equations that are similar to classical models of convective adjustment. In particular, the SFLT EBC equations are reminiscent of Landau's fluctuation-dissipation theory of hydrodynamics \cite{Landau}. Moreover, the expected equations for the LA SALT EBC model turns out to be quite similar in appearance to the classical Oberbeck-Boussinesq equations for diffusive, viscous dynamics of the classical Rayleigh-B\'enard convection problem. Conveniently, the LA SALT EBC fluctuation equations turn out to be linear. This linearity of the fluctuation dynamics enables the derivation of  deterministic dynamical equations for the covariances and higher moments of the fluctuations away from the expected solution of the LA SALT EBC equations. This progression of results is reminiscent of the classic work of \cite{Kraichnan1968} for stochastic passive transport, although of course here we have restricted to a stochastic model of the planar EBC equations.

In summary, the present work begins by discussing the solution properties of the deterministic EBC equations. In particular, the Hamiltonian structure of the EBC equations and the linear stability analysis of their equilibria are discussed, following \cite{AH1987,AHMR1986,HMRW1985}.
We then transfer as much as possible of this deterministic structure to the stochastic EBC dynamics by applying the SALT approach introduced in \cite{Holm2015}. Specifically, the SALT approach is adapted here to the Hamiltonian formulation for the deterministic EBC equations. The analysis of the solution behaviour of the SALT EBC equations has been treated in the literature, particularly \cite{Alonso-Oran-bDLeon-JNLS2020,CHLMP2021,GHL2019}. The present work also makes contact with  two variants of the SALT algorithm. These are the SFLT variant  and the LA SALT variant. The SFLT variant \cite{HH2021} is reminiscent of Landau's fluctuating hydrodynamics approach, \cite{Landau}.  The LA SALT variant for the EBC initial value problem whose derivation and well-posedness was studied in \cite{Alonso-Oran-etal-JStaPhys}.

\paragraph{\bf Main content of the paper}
\begin{enumerate}
    \item 
    Section \ref{sec-ModBkgrd} explains the geometric mechanics background underlying the Hamiltonian dynamics of EBC. Geometrically, the dynamics of EBC is understood as coadjoint Hamiltonian motion generated by the semidirect-product action of the Lie algebra $(f_1\circledS f_2)$ on function pairs $(f_1;f_2)\in (f_1\circledS f_2)^*$. 
    
    This coadjoint Hamiltonian motion is a fruitful geometric framework for the application of the SALT approach to stochastic modelling in fluid dynamics. 
    \item 
    Section \ref{sec: SALT-EBC} introduces the mathematical preliminaries and notations for the stochastic analysis of SALT EBC.
    
    The purpose of introducing the SALT equations is to enable uncertainty quantification for EBC and thereby admit data assimilation methods following \cite{cotter2018modelling,cotter2019numerically} that enable the reduction of uncertainty in coarse-grained computational simulations of convective adjustment in a vertical plane. No details are given concerning the application of the SALT equations in data analysis and assimilation methods for EBC, because this is work currently in progress. For an example of such applications, the reader may refer to the data analysis and assimilation results for the related case of quasigeostrophic channel flow in  \cite{cotter2018modelling,cotter2019numerically}.

In section \ref{sec:SEBC-2variants}, we discuss two variants of SALT EBC.   These are: (1) the method of stochastic \emph{forcing} by Lie transport (SFLT); and (2) the \emph{Lagrangian Averaged} SALT (LA SALT) equations. 

As mentioned earlier, the SFLT EBC equations are reminiscent of Landau's fluctuation-dissipation theory of hydrodynamics \cite{Landau}. The SFLT EBC equations were introduced in \cite{HH2021} and have been applied to the three-dimensional hydrostatic primitive equations for a rotating stratified fluid \cite{HuPatching2022}. Hence, the derivation and application of SFLT EBC will not be discussed in detail here, because this is also work in progress. 
    
    Lagrangian Averaged SALT (LA SALT) equations for the EBC  initial value problem, whose properties were studied in \cite{Alonso-Oran-etal-JStaPhys}. The  expected equations of the LA SALT EBC  turn out to be quite similar in appearance to the original dissipative Oberbeck-Boussinesq equations for RBC. 
The LA SALT equations for EBC lead to deterministic dynamical equations for the covariances and higher moments of the fluctuations away from the expected solution of the LA SALT EBC equations in the vorticity representation. As discussed further in \cite{Alonso-Oran-etal-JStaPhys}, the evolution of these statistical properties for LA SALT EBC provides a glimpse into how one might imagine how to define climate change in terms of its evolving statistics, rather than being simply the long-term dynamics of a 30-year time average. 

   \item
    Section \ref{sec: Conclusion} provides a brief conclusion and indicates several other promising directions for future research in the stochastic analysis of EBC.  
    \item
    Appendix \ref{appendix-A} discusses the Lie-Poisson Hamiltonian structure of deterministic EBC. The Hamiltonian structure of EBC guides the study of stability of its linearised solutions as well as its stochastic generalisations. 
    \item
    Appendix \ref{appendix-B} discusses the class of EBC equilibria which are critical points of a certain constrained energy. We then study 
their linear Lyapunov stability conditions using the energy-Casimir approach and determine their linear instability properties by deriving the EBC version of the Taylor-Goldstein equation for stratified incompressible planar fluid flow. 
\end{enumerate}

\subsection{Problem formulation}

The classical Rayleigh-B\'enard convection (RBC) equations in the Boussinesq approximation describe creation of circulation due to misalignments of gradients of buoyancy from the vertical direction of gravity in a planar slice. Under gravity, the motion converts buoyant potential energy derived from heating below (and cooling above) into circulation and kinetic energy as the fluid rises. 
The augmentation of Euler's fluid equations to account for the circulation effects of vertical buoyancy gradients under gravity is represented by the dynamics of Rayleigh-B\'enard convection (RBC).

The fluid motion considered in the classical case takes place on a 2D planar rectangular vertical slice with horizontal coordinate $x'=x/L$ and vertical coordinate $z'=z/H$. The bottom edge at $z/H=0$ is heated to a certain constant temperature $\Theta_{z=0}$ and the top edge at $z/H=1$ is cooled to a lesser constant temperature $\Theta_{z=H}$. The divergence-free fluid flow velocity $\bu(x',z',t)=: \nabla^\perp\psi = (-\,\psi_{z'}\,,\, \psi_{x'})$ with stream function $\psi(x',z',t)$ in the vertical slice is usually assumed to obey the Navier-Stokes equations under the Oberbeck-Boussinesq approximation. 

The fluid is assumed to have a constant heat capacity, so the advection-diffusion equation for heat can be expressed in terms of its temperature, $\Theta$. The governing set of equations for the classical Rayleigh-B\'enard convection problem under these assumptions is given by \cite{Saltzman1962} as,
\begin{align}
\begin{split}
\frac{\partial}{\partial t}\textbf{u} + \textbf{u}\cdot\nabla \textbf{u} &= -\nabla p + \nu \Delta \textbf{u} + \textbf{F} ,\\
\frac{\partial}{\partial t} {\Theta} + \textbf{u}\cdot \nabla {\Theta} &= \kappa\Delta {\Theta},\\
\nabla\cdot \textbf{u} &= 0.
\end{split}
\label{DOBeqns}
\end{align}
In the top equation in \eqref{DOBeqns}, the pressure $p$ enforces incompressibility and the buoyancy force is given by 
\begin{equation}
\mathbf{F}  = g b \hat{\textbf{z}} = \alpha g ({\Theta}/{\Theta}_{ref}-1) \hat{\textbf{z}}
\quad\hbox{with}\quad
1+b = \rho/\rho_{ref}= 1 - \alpha {\Theta}/{\Theta}_{ref}
\,.
\label{Buoy-Force}
\end{equation}
The buoyancy force \textbf{F} is assumed to act in the vertical direction and depend on the (positive) thermal expansion coefficient $\alpha>0$, the acceleration of gravity $g$ and the deviation of the local temperature ${\Theta}$ from a time-independent reference temperature ${\Theta}_{ref}$. The dissipation terms involve the linear diffusion of momentum per unit mass $\textbf{u}$, by viscosity $\nu$, and heat per unit mass ${\Theta}$, by heat diffusivity $\kappa$. In the absence of these dissipation terms in $\nu$ and $\kappa$, the dynamics is reversible and Hamiltonian.

\section{Hamiltonian modelling background}\label{sec-ModBkgrd}

%\todo[inline]{DH: Should we mention alpha-EBC? Should take a look at the linear stability equations.}

\subsection{Lie-Poisson Hamiltonian Euler-Boussinesq convection (EBC) equations} 

\paragraph{\bf Units}
In the absence of dissipation, $\nu=0=\kappa$, the Rayleigh-B\'enard convection (RBC) equations in \eqref{DOBeqns} become the  Euler-Boussinesq convection (EBC) equations, given by 
\begin{align}
\begin{split}
\frac{\partial}{\partial t}\textbf{u} + \textbf{u}\cdot\nabla \textbf{u} &= -\nabla p + \alpha g ({\Theta}/{\Theta}_{ref}-1) \hat{\textbf{z}}\,,\\
\frac{\partial}{\partial t} {\Theta} + \textbf{u}\cdot \nabla {\Theta} &= 0,\\
\nabla\cdot \textbf{u} &= 0.
\end{split}
\label{DEBCeqns}
\end{align}

In the EBC equations, the natural units of measure are: (i) length $\sqrt{x^2+z^2}$ in units of the height of the domain $H$;  (ii) velocity, $\bu$, in units  $U$ (ii) time, $t$, in units of $H/U$; (iv) stream function, $\psi$ in units of $UH$; (v) and local temperature, $\Theta$, in units of $\Theta_{ref}$. 

Taking the curl of the EBC motion equation in \eqref{DEBCeqns} and introducing the Froude number $Fr=U/\sqrt{gH}$ in these units of space and time yields the dimension-free equations,
\begin{align}
\begin{split}
\partial_t \omega + \bm{u}\cdot \nabla \omega
&= \frac{\alpha}{Fr^2}\Theta_x
\,,
\\
\partial_t \Theta + \bm{u}\cdot \nabla \Theta &= 0
\,,\end{split}
\label{EBC-eqns-def1}
\end{align}
\begin{align}
\begin{split}
\quad\hbox{with}\quad 
\bm{u} &= -\,\bs{\hat{y}}\times \nabla\psi =: \nabla^\perp\psi = (-\,\psi_z\,,\, \psi_x)\,,
\quad\hbox{so}\quad
{\rm div}\bm{u} = 0
\,,\\
\quad\hbox{and}\quad 
\omega &:= -\,\bs{\hat{y}}\cdot {\rm curl}\bm{u} = \Delta \psi  = \psi_{xx} + \psi_{zz}
\quad\hbox{so}\quad 
-\,\bs{\hat{y}}\cdot {\rm curl}(\Theta \hat{\textbf{z}}) = \Theta_x
\,.
\end{split}
% \label{EBC-eqns-def2}
\end{align}
Here, the quantity $\omega:=-\,\bs{\hat{y}}\cdot {\rm curl}\bm{u} = \Delta \psi$ is the $\bs{\hat{x}}\times\bs{\hat{z}}=-\,\bs{\hat{y}}$ component of the vorticity, pointing \emph{into the $(x,z)$ plane}, to follow the notation in \cite{Saltzman1962}.

\paragraph{\bf Jacobian notation}
The misalignment of the gradients $\nabla a$ and $\nabla b$ of functions $a,b$ in a vertical (x,z) plane is expressed by the Jacobian, denoted
\begin{align}
J(a,b):=-\,\bs{\hat{y}}\cdot \nabla a\times \nabla b = a_x b_z-a_z b_x
\quad\hbox{or equivalently,}\quad
J(a,b)dx\wedge dz = da\wedge db\,.
\label{def-Jac}
\end{align}

%%%%%%%%%%%%%%%%%%%%%%%%%%
\begin{comment}
%%%%%%%%%%%%%%%%%%%%%%%%%%

Let $\Theta= (\Theta-\Theta|_e(z))/\Theta_{ref}$ denote the \emph{departure} of the temperature from the linear conduction profile $\Theta|_e(z)= \Theta_{ref}(1-z) + \Theta_0$ at equilibrium running from the warm bottom plate at $z=0$, to the cool top plate at $z=H$. 
\todo[inline]{DH: Modified following equation $\Theta \to (\wt{\varTheta} {\color{red}\,-\,z})$}

%%%%%%%%%%%%%%%%%%%%%%%%%%
\end{comment}
%%%%%%%%%%%%%%%%%%%%%%%%%%
%
In terms of the Jacobian and the local temperature $\Theta(x,z,t)$, the EBC equations may be expressed as
\begin{align}
\begin{split}
\partial_t \omega + J(\psi,\omega) &= -\, \frac{\alpha}{Fr^2}J(z,\Theta)  \,,
\\
\partial_t \Theta + J(\psi,\Theta ) &= 0
\,.
\end{split}
\label{EBC-eqns-def2}
\end{align}

\paragraph{\bf Kelvin circulation theorem for EBC}
It follows immediately from equation \eqref{EBC-eqns-def1} for velocity $\bu=\nabla^\perp\psi =  (-\,\psi_z\,,\, \psi_x)$ and  vorticity $\omega=\Delta \psi= \psi_{xx}+\psi_{zz}$ that 
\begin{align}
\frac{d}{dt}\oint_{c(u)} \!\!\!\! \mbf{u}\cdot d\bx
= 
\frac{d}{dt}\int\!\!\!\!\int_{\p S = c(u)} \!\!\!\! \omega \,dxdz
=  \frac{\alpha}{Fr^2}\int\!\!\!\!\int_{\p S = c(u)} \!\!\!\!  dz\wedge d{\Theta}
\,,\label{EBC-KelvinThm}
\end{align}
for a closed material loop, $c(u)$, moving with the flow velocity $\bu=\nabla^\perp\psi$. Thus, EBC is driven by misalignment between the gradients $\nabla z$ and $\nabla {\Theta}$. 

\paragraph{\bf Rescaling to simplify notation}
Hereafter, we will absorb $\alpha/Fr^2$ into $\Theta$ as $\Theta
\to \Theta':=(\alpha/Fr^2)\Theta$, and then simplify the notation by dropping the prime.

In this simplified notation, the EBC equations \eqref{EBC-eqns-def2} become
\begin{align}
\begin{split}
\partial_t \omega + J(\psi,\omega) + J(z,\Theta) &= 0\,,
\\
\partial_t \Theta + J(\psi,\Theta ) &= 0
\,.
\end{split}
\label{EBC-eqns-def3}
\end{align}
The vertical boundary conditions are $\bu\cdot\bs{\wh{n}} = 0 $ and $\bs{\wh{n}}\times \nabla \Theta = 0$. Equivalently, one may take  $\psi\big|_{z=0}=0$, $\psi_{z=H}=0$ and $\Theta\big|_{z=0}=\Theta_0>0$,  $\Theta\big|_{z=H}=0$; and periodic in the horizontal coordinate $x$, see \cite{Saltzman1962}. 

\paragraph{\bf Examples of computational simulations} Snapshots of three example solutions of the initial value problem for the EBC equations \eqref{EBC-eqns-def2} appear in Figures \ref{figure: example1}--\ref{figure: example3}, below. 
These examples are inspired by the process of oceanic deep water formation driven by Rayleigh-Taylor instability which produces rapidly down-welling water columns called \emph{chimneys} when warm ocean currents run into cold Arctic seas; such as, when the Gulf Stream runs into the Greenland Sea \cite{Kovalevsky2020}. The EBC equations \eqref{EBC-eqns-def2} apply in such oceanic processes, since viscosity and thermal diffusivity can be neglected because of the immense inertia involved. 

The domain in each of these three figures is horizontally periodic and has fixed boundaries above and below. 
In the computational simulations, we used a mixed finite element scheme for the spatial derivatives, and SSPRK3 scheme for the time stepping. The domain was chosen to be a square channel $[0,1]^2$ with $512^2$ resolution. The boundaries were periodic in $x$ and had ``fixed walls" at the bottom and top, along $z=0$ and $z=1$, respectively. We prescribed no-slip boundary conditions, i.e. $\psi=0$ on the fixed boundaries. 

Figure \ref{figure: example1} depicts snapshots of the evolution of an unstably stratified (Rayleigh-Taylor) density distribution superposed initially on a periodic array of Langmuir circulations. Figures \ref{figure: example2} and \ref{figure: example3} depict, respectively, the dynamics of an initially unstably stratified (Rayleigh-Taylor) density distribution that is perturbed along the bottom of the domain slightly to the right of its centre by a small Gaussian-shaped perturbation of temperature (in Figure \ref{figure: example2}), or of vorticity, (in Figure \ref{figure: example3}). The dynamics of all three example solutions start slowly and then accelerate as the stretching and folding of the flow develops sharp thermal shear fronts that then roll up into high wave-number instabilities and eventually mix into a sort of equipartitioned-looking `soup' of small scale fluctuations. 

Animations of these examples of the initial value problem for the EBC equations appear online at the following websites:\\
Example 1: \url{https://youtu.be/zorhwJ0pmUI}\\
Example 2: \url{https://youtu.be/pXU5mJqQjuA}\\
Example 3: \url{https://youtu.be/FFdxxyyRVk8}

%\newpage

%The solutions are expected to display the RBC patterns of \url{https://www.youtube.com/watch?v=G1JTQi5Rfno}.
\begin{figure}[h!]
\centering
% \begin{center}
\begin{subfigure}[b]{0.625\textwidth}
\centering
\includegraphics[width=\textwidth, trim=102 0 124 93, clip]{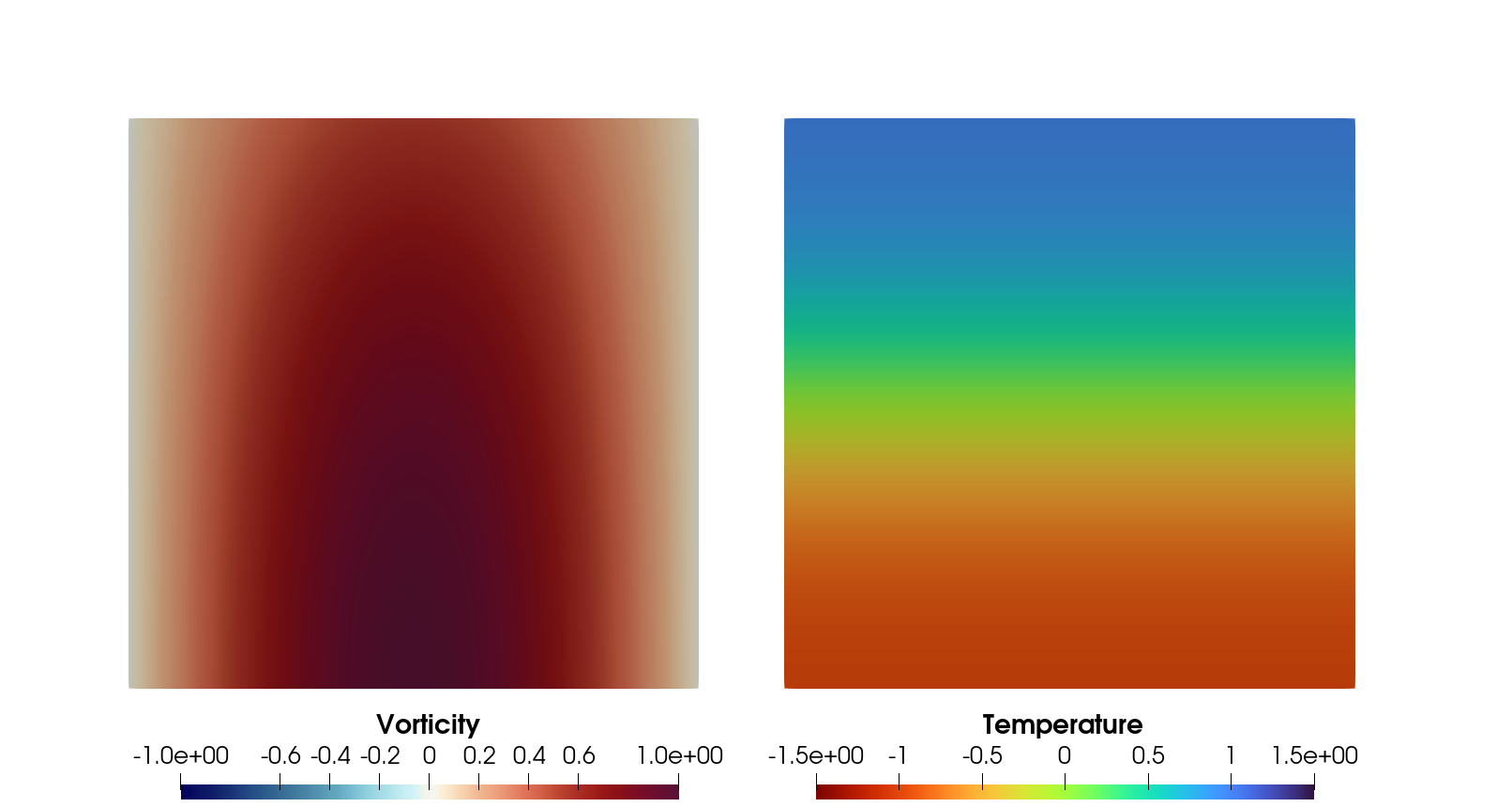}%{EBC-Wei22Nov2021.png}
\label{fig:example 1 - a}
\caption{}
\end{subfigure}
% \vspace{0.1cm}
\begin{subfigure}[b]{0.625\textwidth}
\centering
\includegraphics[width=\textwidth, trim=102 0 124 93, clip]{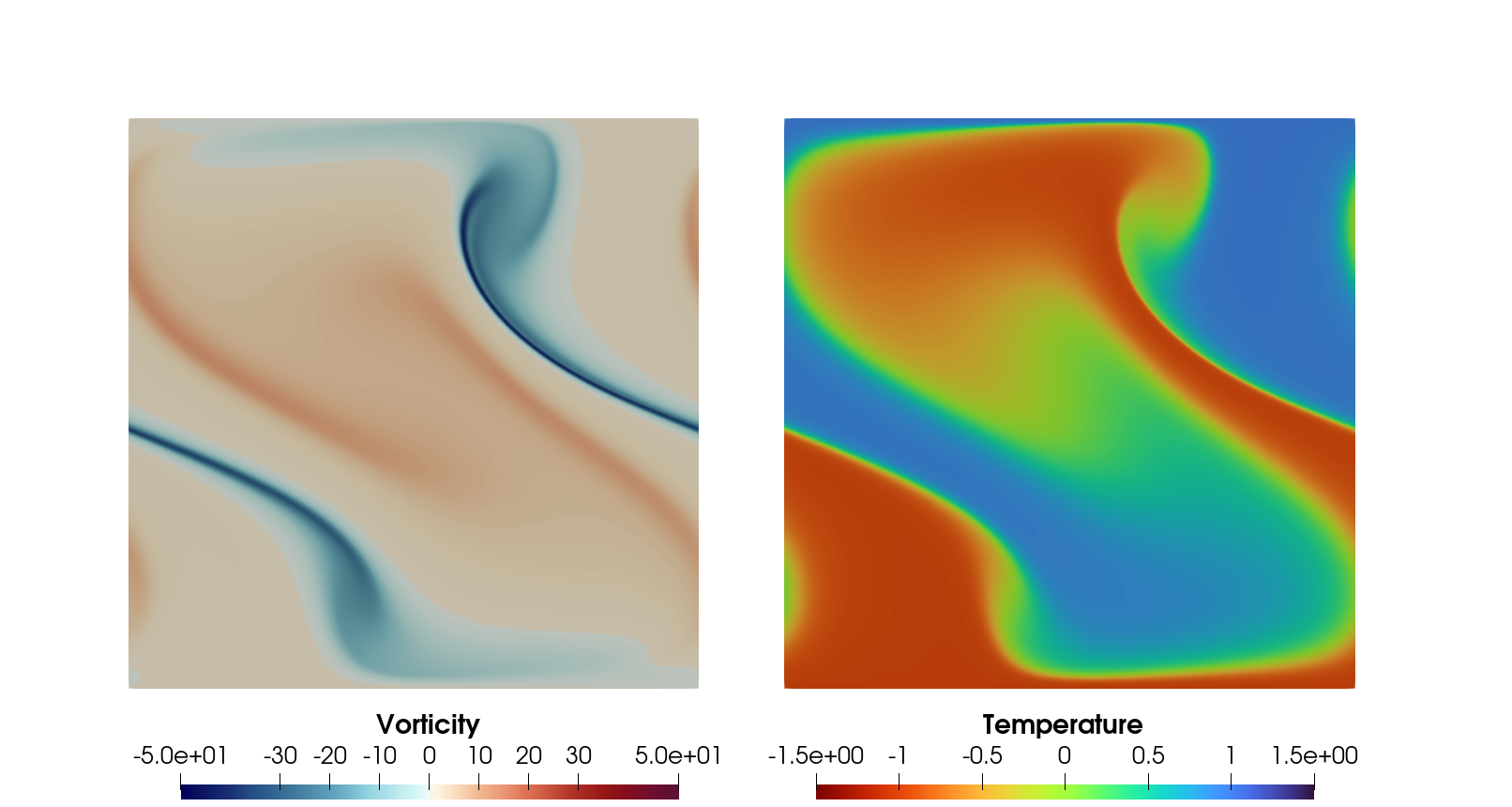}
\label{fig:example 1 - b}
\caption{}
 \end{subfigure}
% \vspace{0.1cm}
\begin{subfigure}[b]{0.625\textwidth}
\centering
\includegraphics[width=\textwidth, trim=102 0 124 93, clip]{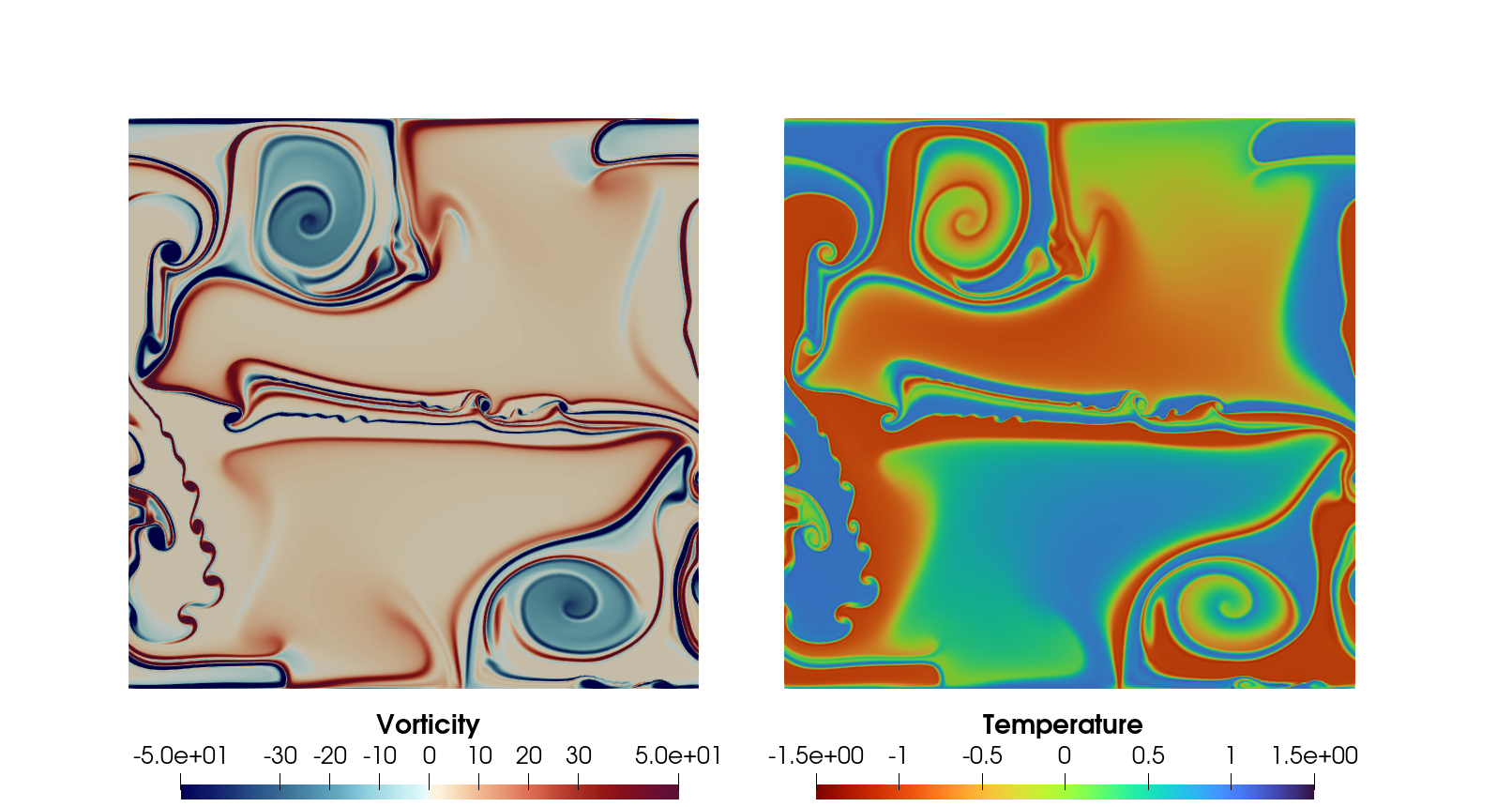}
\label{fig:example 1 - c}
\caption{}
 \end{subfigure}
\caption{Example numerics for \eqref{EBC-eqns-def2}. In this run, we chose the initial conditions for temperature and vorticity to be $\Theta(0, x, z) = \tanh(4(z - 0.5))$ and $\omega(0, x, z) = \sin(\pi x) \cos(z)$. The subfigures show snapshots of the flow at different stages of spin-up. }
\label{figure: example1}
\end{figure}

\bigskip\bigskip

\begin{figure}[h!]
\centering
% \begin{center}
\begin{subfigure}[b]{0.625\textwidth}
\centering
\includegraphics[width=\textwidth, trim=102 0 124 93, clip]{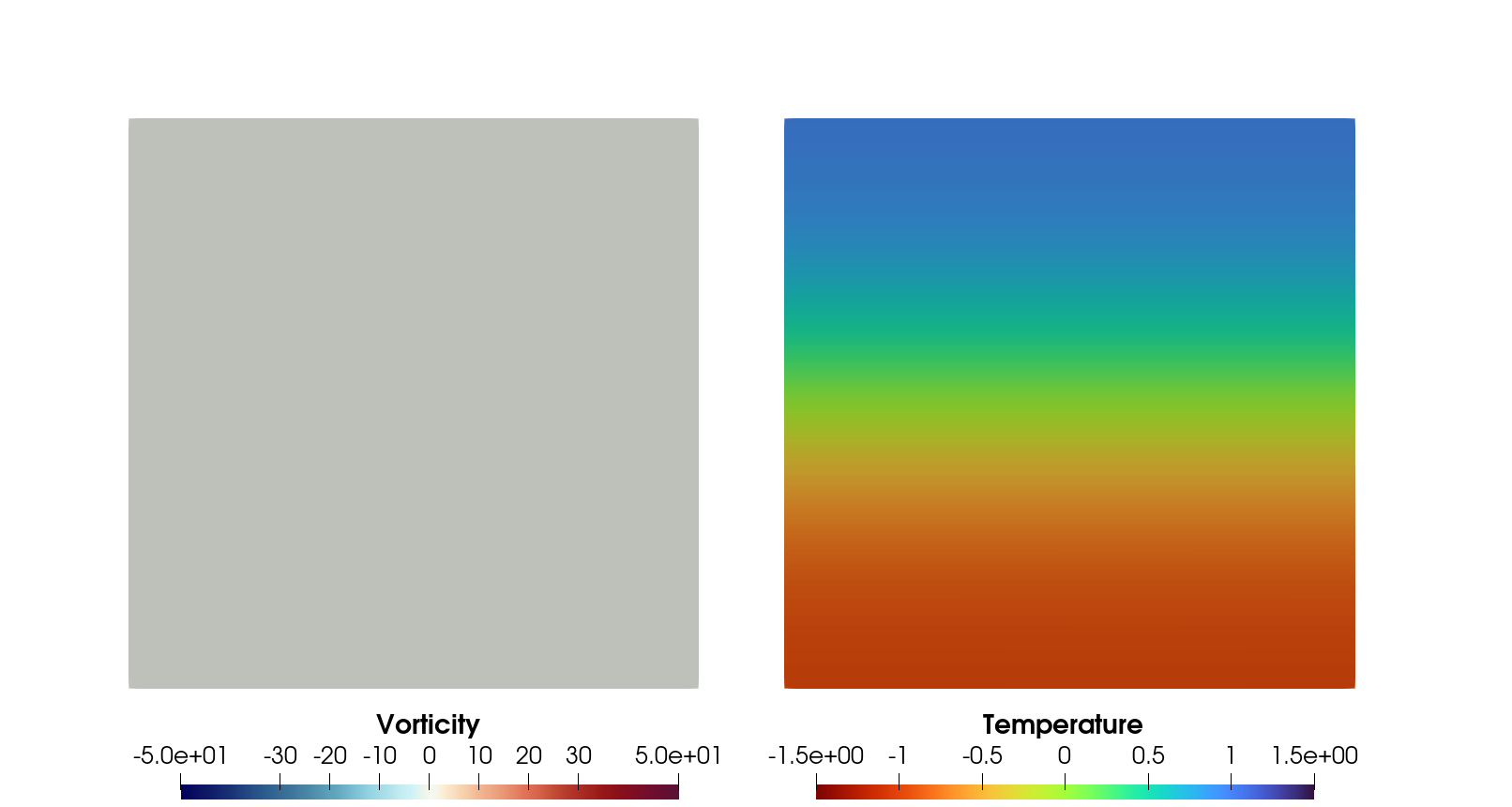}%{EBC-Wei22Nov2021.png}
\label{fig:example 2 - a}
\caption{}
\end{subfigure}
% \vspace{0.1cm}
\begin{subfigure}[b]{0.625\textwidth}
\centering
\includegraphics[width=\textwidth, trim=102 0 124 93, clip]{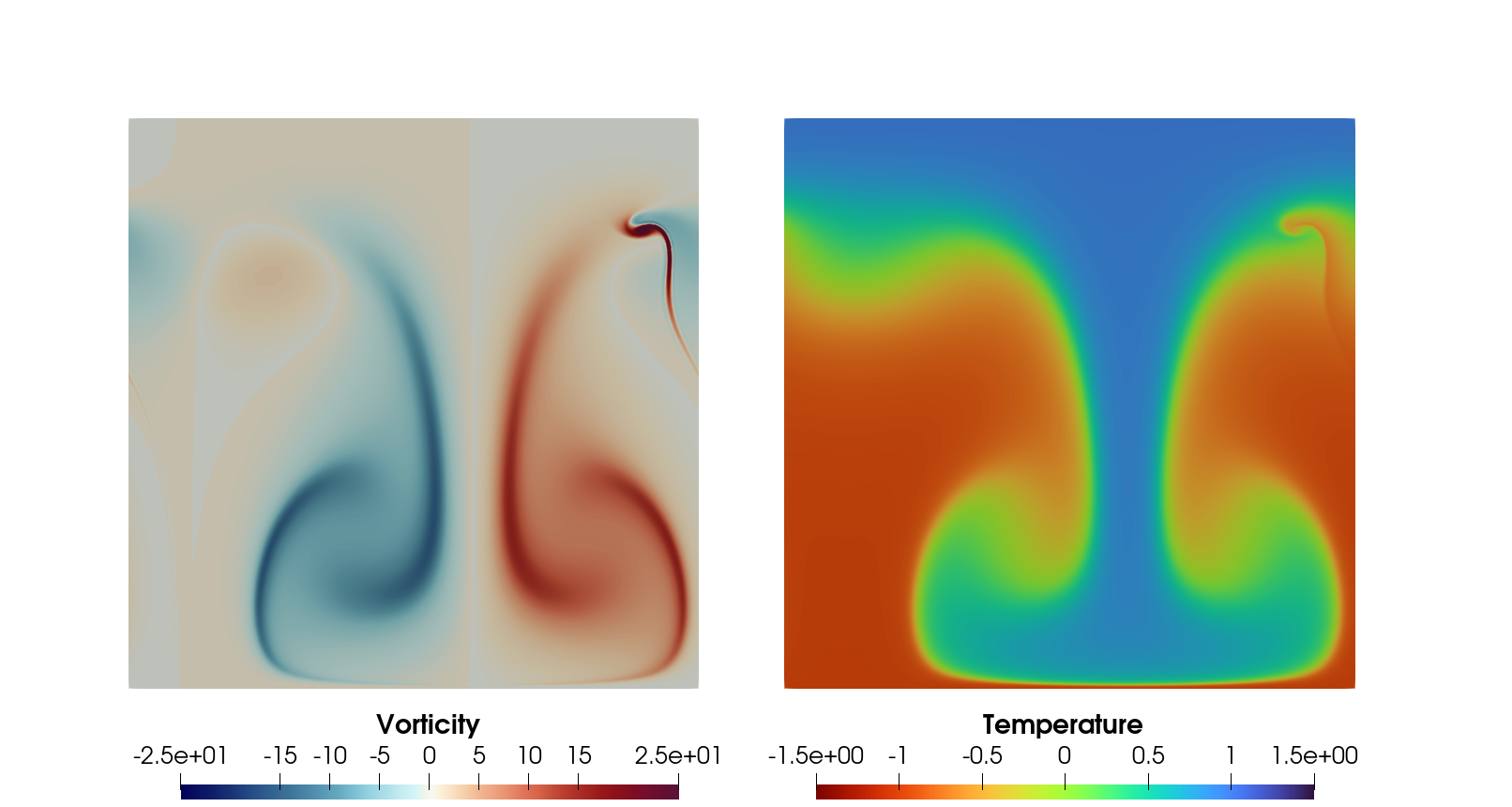}
\label{fig:example 2 - b}
\caption{}
 \end{subfigure}
% \vspace{0.1cm}
\begin{subfigure}[b]{0.625\textwidth}
\centering
\includegraphics[width=\textwidth, trim=102 0 124 93, clip]{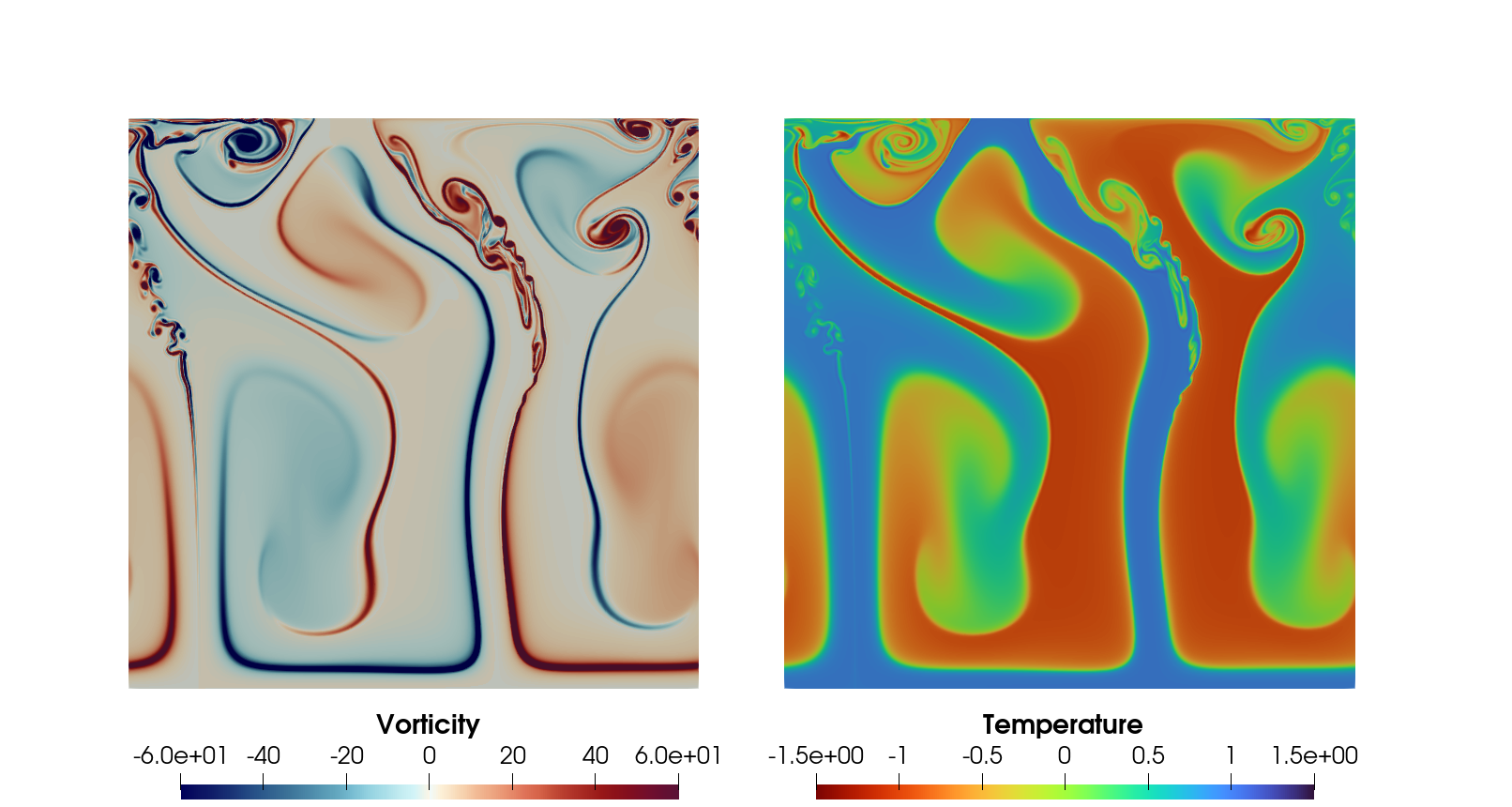}
\label{fig:example 2 - c}
\caption{}
 \end{subfigure}
\caption{Example numerics for \eqref{EBC-eqns-def2}. In this run, we chose the initial conditions for temperature and vorticity to be $\Theta(0, x, z) = \tanh(4(z - 0.5)) + 0.001\exp(-20((x-0.6)^2 + z^2)) $ and $\omega(0, x, z) = 0$. The subfigures show snapshots of the flow at different stages of spin-up. }
\label{figure: example2}
\end{figure}

\newpage

\begin{figure}[h!]
\centering
% \begin{center}
\begin{subfigure}[b]{0.625\textwidth}
\centering
\includegraphics[width=\textwidth, trim=102 0 124 93, clip]{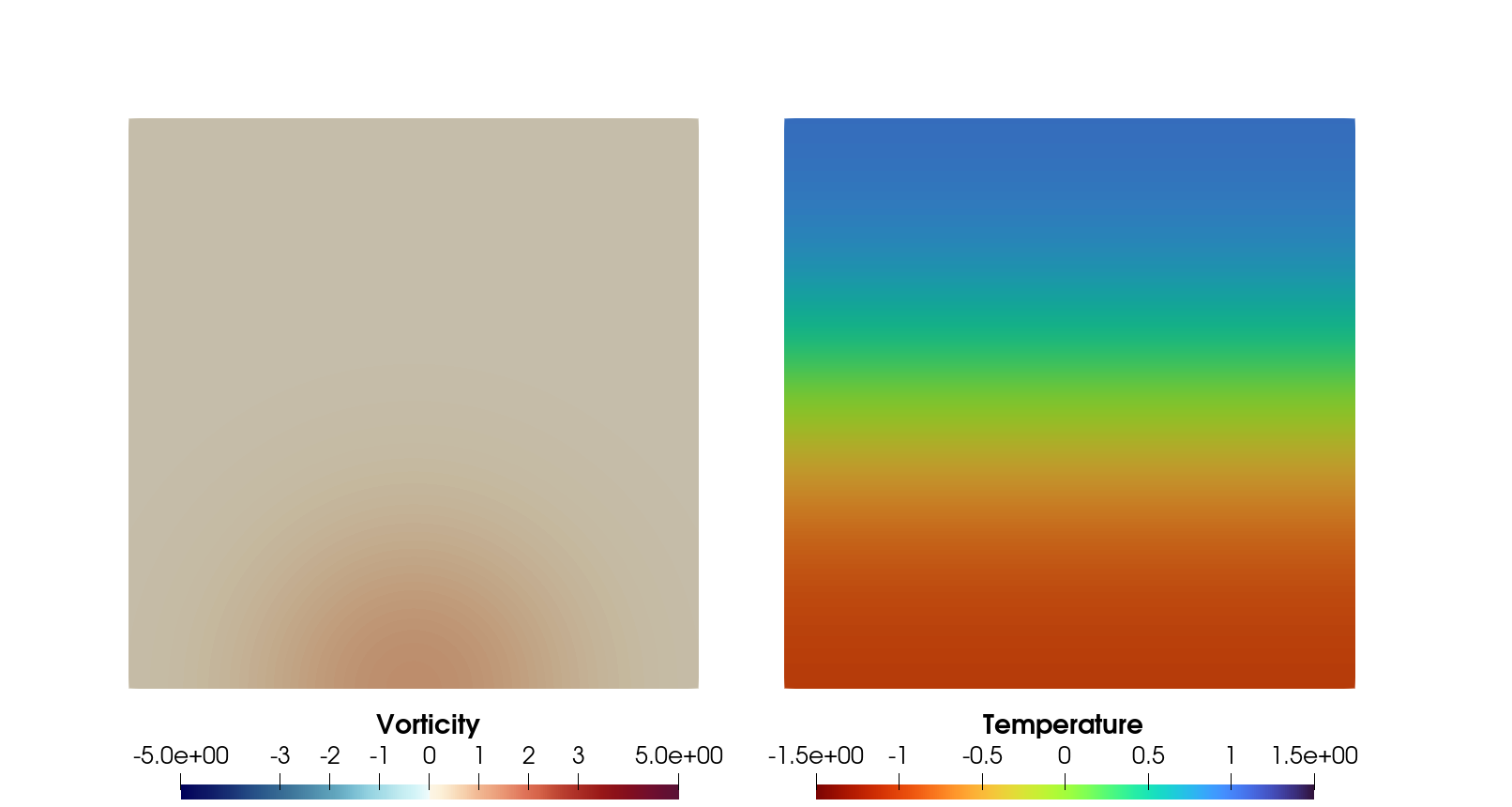}%{EBC-Wei22Nov2021.png}
\label{fig:example 3 - a}
\caption{}
\end{subfigure}
% \vspace{0.1cm}
\begin{subfigure}[b]{0.625\textwidth}
\centering
\includegraphics[width=\textwidth, trim=102 0 124 93, clip]{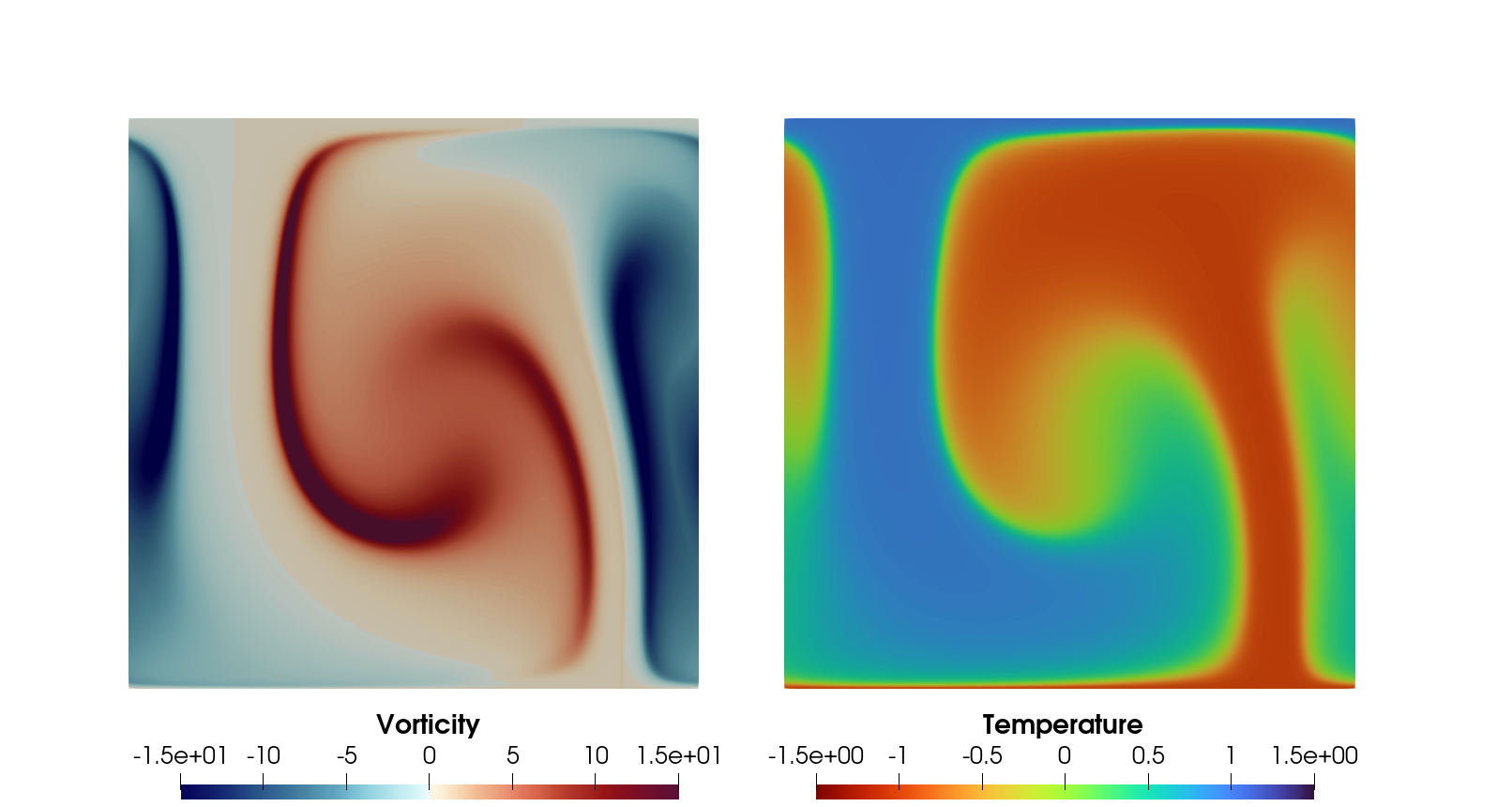}
\label{fig:example 3 - b}
\caption{}
 \end{subfigure}
% \vspace{0.1cm}
\begin{subfigure}[b]{0.625\textwidth}
\centering
\includegraphics[width=\textwidth, trim=102 0 124 93, clip]{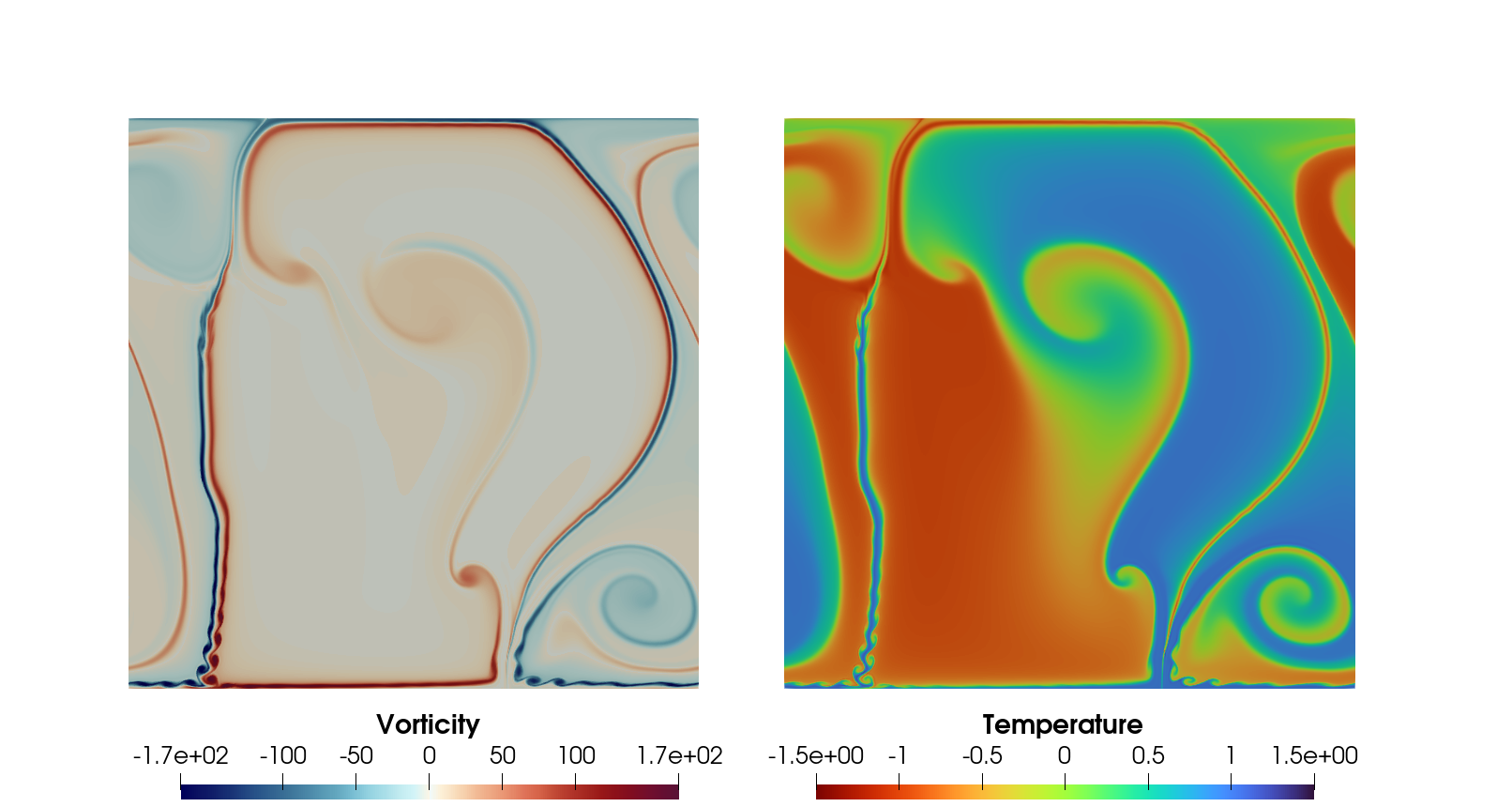}
\label{fig:example 3 - c}
\caption{}
 \end{subfigure}
\caption{Example numerics for \eqref{EBC-eqns-def2}. In this run, we chose the initial conditions for temperature and vorticity to be $\Theta(0, x, z) = \tanh(4(z - 0.5))$ and $\omega(0, x, z) = \exp[-10[(x-0.5)^2 + z^2]]$. The subfigures show snapshots of the flow at different stages of spin-up. }
\label{figure: example3}
\end{figure}

\newpage

\paragraph{\bf Conservation Laws for EBC}
The EBC equations in \eqref{EBC-eqns-def3} preserve the sum of the kinetic and thermal energies of the EBC system, made definite in sign by completing squares to find
\begin{align}
\begin{split}
H_{EBC}(\omega,\Theta) 
&= \int  \frac12\,  \omega \Delta^{-1}\omega - z \Theta \,dxdz
\\&= \frac12 \int  \omega \Delta^{-1}\omega + \big(\Theta -   z\big)^2  dxdz
- \frac12\int(\Theta^2 +  z^2) \,dxdz
\,.
\end{split}
\label{EBC-eqns-Ham}
\end{align}
The last (negative) term in the second line of \eqref{EBC-eqns-Ham} is immaterial, because $\int z^2dxdz$ is a non-dynamical constant and, as mentioned next, $\int{\Theta}^2dxdz$ is a Casimir in the Lie-Poisson Hamiltonian formulation of these EBC equations.

The deterministic EBC equations in \eqref{EBC-eqns-def3} also preserve an infinity of integral conservation laws, determined by two arbitrary differentiable functions of temperature $\Phi(\Theta)$ and $\Psi(\Theta)$ as
\begin{equation}
\mc{C}_{\Phi,\Psi}(\Theta,\omega) = \int_{\mathcal D}\Phi(\Theta ) + \omega \Psi(\Theta )\,\dd x\dd z\,.
\label{eq:casimirsEBC}
\end{equation}
That the EBC dynamics in \eqref{EBC-eqns-def1} preserves the energy in \eqref{EBC-eqns-Ham} and the family of integral quantities in \eqref{eq:casimirsEBC} can be verified by direct computations. However, the infinity of conservation laws in \eqref{eq:casimirsEBC} also indicates that the EBC system in \eqref{EBC-eqns-def1} possesses a rich mathematical structure which will help in diagnosing its solution behaviour. We investigate the geometrical aspects of this mathematical structure in the next few subsections.
\medskip

%%%%%%%%%%%%%%%%%%%%%%%%%%%%%%%
%\todo[inline]{DH: Do we want to consider double bracket dissipation? Minimum energy solutions on a Casimir surface? }
%%%%%%%%%%%%%%%%%%%%%%%%%%%%%%%

\medskip

\subsection*{Lie-Poisson Hamiltonian form of EBC}
The conservation laws in $\mc{C}_{\Phi,\Psi}$ signal a deeper mathematical content. For example, the EBC equations may be expressed in Lie-Poisson Hamiltonian form, in terms of a semidirect-product Lie-Poisson bracket given in matrix form as
\begin{equation}
 \frac{\partial }{\partial t} 
\begin{bmatrix}
\omega \\ \Theta
\end{bmatrix}
=
\begin{bmatrix}
J(\omega,\,\cdot\,) & J(\Theta,\,\cdot\,) 
\\ 
J(\Theta,\,\cdot\,)  & 0
\end{bmatrix}
\begin{bmatrix}
{\delta \mc H_{EBC}}/{\delta \omega} = \psi
\\ 
{\delta \mc H_{EBC}}/{\delta \Theta} = \Theta - z
\end{bmatrix}
=:
\Bigg\{\begin{bmatrix}
\omega \\ \Theta
\end{bmatrix} , \mc{H}_{EBC} \Bigg\}
\,.
\label{EBC-LPHam}
\end{equation}
The Lie-Poisson bracket relation can also be written in the shorter notation  
\begin{equation}
 \frac{\partial (\omega \,; \Theta)}{\partial t}  = \big\{(\omega \,;  \Theta)\,,\, \mc{H}_{EBC}\big\}. 
\label{short-note}
\end{equation}

\subsection*{Energy conservation of deterministic EBC flows}
The Hamiltonian $\mc{H}_{EBC}$ is conserved because the Lie-Poisson matrix operator defined in \eqref{EBC-LPHam} is skew symmetric in the $L^2$ pairing, so that $\frac{d}{dt}\mc{H}_{EBC} = \{\mc{H}_{EBC},\mc{H}_{EBC}\}=0$. The quantities $\mc{C}_{\Phi,\Psi}$ in \eqref{eq:casimirsEBC} are conserved because their variational derivatives are null eigenvectors of the Lie-Poisson matrix operator in \eqref{EBC-LPHam}. The type of conserved quantity such as  $\mc{C}_{\Phi,\Psi}$ that Poisson commutes with any Hamiltonian are called the \emph{Casimir functions} of the Lie-Poisson bracket \cite{HMR1998}.

\subsection*{Geometric considerations of deterministic EBC flows}

The EBC conservation laws in \eqref{eq:casimirsEBC} for divergence-free flow are in the same form as are satisfied in the dynamics of semidirect-product Lie-Poisson bracket for fluid flow in the variables $(\omega,\Theta)$ \cite{HMR1998}. Thus, the EBC model fits into the standard Hamiltonian framework for all ideal fluids with advected quantities, including their Casimir conservation laws, as explained further in Appendix \ref{appendix-B}.

\paragraph{\bf Summary of equilibrium EBC properties in Appendix \ref{appendix-B}.}
The planar EBC model in \eqref{EBC-eqns-def1} has been shown to be a Hamiltonian system for the semidirect-product Lie-Poisson bracket defined in equation \eqref{EBC-brkt3}. This result implies that the EBC system will possess all of the geometric properties belonging to the class of ideal fluid models with advected quantities. 

The geometric properties of this class of ideal fluid models are known \cite{HMR1998}. For our case, the geometric nature of the evolution of the EBC system written in its Hamiltonian form is discussed in the appendix \ref{appendix-A} where we identify its Lie-Poisson bracket with the coadjoint action of the semidirect-product Lie group of symplectic transformations defined in equation \eqref{EBC-brkt3}. 

Namely, the solutions of the EBC system in \eqref{EBC-eqns-def1} evolve by undergoing coadjoint motion along a level set of the Casimirs in equation \eqref{eq:casimirsEBC}. This coadjoint motion is governed by following a time-dependent path on the Lie group manifold of semidirect-product symplectic diffeomorphisms acting on the domain of flow $\mc{D}$. 

The association of the EBC system with the smooth flow of symplectomorphisms on  $\mc{D}$ may be a hidden factor in the success of the analysis of solution properties of EBC solutions. For example,  it was shown in  \cite{Foias-etal-RBCanalysis1987,Temam-RBCanalysis1997} that the 2D RBC system with viscosity and heat diffusivity has a finite-dimensional global attractor. 

The corresponding analytical properties for the stochastic counterpart of RBC may also follow suit for LA SALT EBC below, after we use the deterministic properties obtained here to introduce the Hamiltonian form of \emph{stochastic advection by Lie transport} (SALT) to model uncertainty in the dynamics of the EBC system discussed here. 

Having illuminated the  Lie-Poisson Hamiltonian structure of deterministic EBC dynamics,  next we will  begin discussing its stochastic counterpart. 

%%%%%%%%%%%%%%%%%%%%%%%%%%%%%%%
%\todo[inline]{DH: Mention that adding SALT noise preserves Casimir surfaces. }
%%%%%%%%%%%%%%%%%%%%%%%%%%%%%%%

\section{Hamiltonian formulation of SALT for EBC}\label{sec: SALT-EBC}
\label{sec:HamiltonSEBC}

%%%%%%%%%%%%%%%%%%%%%%%%%%%%%%%
%\todo[inline]{Mention SALT treatment of L63 in  \cite{GHL2019}.}
%%%%%%%%%%%%%%%%%%%%%%%%%%%%%%%

By following \cite{Holm2015}, the Hamiltonian formulation of SALT for EBC can be derived by augmenting its deterministic Hamiltonian in \eqref{EBC-eqns-Ham} to add a stochastic Hamiltonian whose flow under the Lie-Poisson bracket \eqref{EBC-brkt3} induces a Stratonovich stochastic flow along the characteristics of the symplectic vector field generated by the stochastic Hamiltonian process $\langle \omega \,,\,{\chi}(x,z) \rangle \circ dW_t$. The semimartingale total Hamiltonian is then given by 
\begin{align}
{\rm d}h_{EBC} = H_{EBC}(\omega,\Theta)\dd t + \int_{\mc D} \omega \,{\chi}(x,z) \dd x \dd z\circ \dd W_t
\,,\label{SALT-HamShift-EBC}
\end{align}
where the deterministic part $H_{EBC}$ of the Hamiltonian ${\rm d}h_{EBC}$ is given in equation \eqref{EBC-eqns-Ham}.
Next, the Lie-algebra valued variations with respect to the \emph{dual} Lie-algebra valued variables $(\omega,\Theta)\in (f_1\circledS f_2)^*$ for EBC are expressed in semimartingale form as
\begin{align}
\delta\big({\rm d}h_{EBC}\big) = \int \Big(\psi {\rm d}t + {\chi}(x,z) \circ \dd W_t \Big)\delta \omega
+ \big( \Theta - z\big)\dt \,\delta \Theta
\, \dd x \dd z
\,.\label{SALT-deltaHam-EBC}
\end{align}
SALT EBC dynamics is then expressed geometrically using the coadjoint operator ${\rm ad}^*: \mathfrak{g}\times \mathfrak{g}^*\to \mathfrak{g}^*$  in \eqref{SDP-EBC-LP-ad-star} in stochastic integral form as
\begin{align}
\big({\rm d} \omega ; {\rm d}\Theta\big) 
= - \,{\rm ad}^*_{\big(\psi {\rm d}t + {\chi}(x,y) \circ \dd W_t \,;\, (\Theta - z)dt \big)}
\big(\omega ; \Theta\big)
\,.
\label{SDP-EBC-LP-SALT}
\end{align}
Thus, SALT EBC solutions evolve by stochastic coadjoint motion under right semidirect-product action of symplectic diffeomorphisms on the dual of its Lie algebra with dual (momentum-map) variables $(\omega ; \Theta)\in (f_1\circledS f_2)^*$. The symplectic semimartingale vector fields
$({\delta ({\rm d}h_{EBC})}/{\delta \omega}\,;\,{\delta ({\rm d}h_{EBC})}/{\delta \Theta})\in f_1\circledS f_2$ whose characteristic curves generate the stochastic coadjoint solution behaviour are obtained from the variations in equations \eqref{SALT-deltaHam-EBC}.

One may write the SALT EBC equations in Hamiltonian  matrix operator form as
\begin{equation}
 \frac{\partial }{\partial t} 
\begin{bmatrix}
\omega \\ \Theta
\end{bmatrix}
=
\begin{bmatrix}
J(\omega,\,\cdot\,) & J(\Theta,\,\cdot\,) 
\\ 
J(\Theta,\,\cdot\,)  & 0
\end{bmatrix}
\begin{bmatrix}
{\delta \mc ({\rm d}h_{EBC})}/{\delta \omega} = \psi {\rm d}t + {\chi}(x,z) \circ \dd W_t
\\ 
{\delta \mc ({\rm d}h_{EBC})}/{\delta \Theta} = (\Theta - z)\dt
\end{bmatrix}
\,.
\label{EBC-LPHam-SALT}
\end{equation}

In standard form, the SALT version of the EBC equations \eqref{EBC-eqns-def1} for convection in a planar vertical slice becomes
\begin{align}
\begin{split}
{\rm d} \omega + (\bu\, \dt + \bm{\xi}_i \circ \dd W^i)\cdot \nabla \omega
&= \Theta_x\,\dt
\,,
\\
{\rm d} \Theta + (\bu\, \dt + \bm{\xi}_i \circ \dd W^i)\cdot \nabla \Theta &= 0
\,,
\end{split}
\label{EBC-eqns-SALT}
\end{align}
where 
\[
\bu :=  \nabla^\perp\psi = (-\,\psi_z\,,\, \psi_x)\,,
\quad\hbox{and}\quad
\bm{\xi}_i := \nabla^\perp{\chi}_i = (-\,{\chi}_{i,z}\,,\, {\chi}_{i,x})\,,
\]
The boundary conditions are the same as before and the noise terms should be compatible with those boundary conditions .

Here, the fixed symplectic vector fields $\bs{\xi}_i(x,z):=\nabla^\perp{\chi}_i(x,z)$ need to be obtained from observed or simulated data, using the chain of procedures of calibration, analysis and assimilation of fluid data established in \cite{cotter2018modelling,cotter2019numerically}. 
\begin{comment}
\begin{multicols}{2}%$\,$\vspace{-6.5mm}
% we may want unbalanced multicols here, in which case, we should use {multicols*}

%overleaf suggests the use of wrapfigure to place floating objects such as figures in multicol, see https://www.overleaf.com/learn/latex/Multiple_columns#Inserting_floating_elements
\begin{wrapfigure}{l}{0.8\linewidth}
\includegraphics[width=0.46\linewidth]{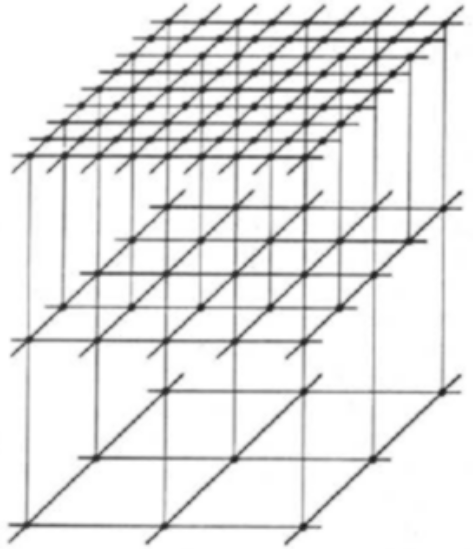}
\end{wrapfigure}

\vfill\null
\columnbreak
% \vspace{2mm}
%
(1) Fine-scale `Truth'  ($1024^2$) e.g., from fluid PDE simulations\\
(2) Coarse-scale SALT SPDE ($64^2$) 
for stochastic 2D EPC velocity
Lagrangian advection velocity.
This step quantifies uncertainty.\\
(3) Particle filtering of data $(4^2)$
This step reduces uncertainty.
\end{multicols}
\end{comment}

\paragraph{The SALT algorithm.}  The multiscale features of the SALT algorithm can be illustrated as a series of three steps 
  applied on different levels of integration, projection and comparison.\bigskip

\begin{tabular*}{1.0\textwidth}%
     {@{\extracolsep{1cm}}l l}
\begin{minipage}[ ]{1.25in}
\begin{picture}(1,1.4)(0,50)%(width,height)(x-offset,y-offset)
\includegraphics[width=3.75cm]{figures/MultiScale.png} 
\put(-108,-13){\bf Multiscale comparisons.}
\end{picture}
\end{minipage}
& 
  \begin{minipage}[  ]{5.5in}\normalsize$\,$\vspace{-4mm}

(1) Fine-scale `Truth'  ($1024^2$) obtained, e.g., from fluid PDE simulations.\vspace{1cm}

(2) Coarse-scale SALT SPDE ($64^2$) 
for the stochastic 2D EPC 
Lagrangian advection velocity.
This step quantifies the uncertainty of the coarse simulation.\vspace{8mm}

(3) Particle filtering using selected data $(4^2)$
This step reduces uncertainty.
\vspace{1cm}
\end{minipage}
\end{tabular*}

Briefly put, the multiscale SALT procedure using data analysis and assimilation executes the following steps beginning from fine-scale computational simulations:

\begin{itemize}
\item 
Simulate the Eulerian PDE on a fine scale and obtain their correlation statistics in order to calibrate the SPDE on a coarser scale.

\item 
Create the data base from, e.g., Lagrangian tracers from fine-scale deterministic numerical simulations of an Eulerian PDE system. 

\item Calibrate the Eulerian correlation eigenvectors $\bs{\xi}_{i}(x,z)$ as emperical orthogonal functions derived from either  the fine-scale PDE numerics, or from the observed data. 

\item Derive the SALT SPDE from an approximate Drift + Noise velocity decomposition based on homogenisation in time, as discussed in \cite{CotterGottwaldHolm2017}. 

\item Simulate the SALT SPDE on the coarse scale, to quantify uncertainty arising from combined modelling error \& numerical simulation error.

\item Assimilate a selection of observed data at high resolution by using particle filtering and thereby reduce the uncertainty. 

\end{itemize}

For an explicit example of the SALT data calibration and assimilation procedure for the well-known quasigeostrophic fluid equations, see \cite{cotter2019numerically}.

\section{Two variants of the SALT approach: SFLT and LA SALT} \label{sec:SEBC-2variants}

Let us make some brief remarks about two potentially interesting variants of the SALT approach. The first variant provides an alternative approach to developing stochastic fluid equations for EBC that enables stochastic perturbations that preserve energy. The second variant allows one to define the `climate change' of the EBC equations by deriving equations for their expectation dynamics and the evolution of their statistical properties. These two variants, called Stochastic Forcing by Lie Transport (SFLT) and Lagrangian Averaged SALT (LA SALT), respectively, will be discussed briefly below as open problems. 

\subsubsection*{Stochastic Forcing by Lie Transport (SFLT)}
An energy-preserving variant of SALT can be formulated for EBC by introducing semimartingale terms into the Lie-Poisson bracket, as follows \cite{HH2021}
\begin{align}
\begin{split}
 {\rm d} 
\begin{bmatrix}
\omega\\ \Theta
\end{bmatrix}
&=
\begin{bmatrix}
J(\omega \dt + \xi(x,z) \circ \dd W_t ,\,\cdot\,) & J(\Theta \dt + \xi(x,z) \circ \dd W_t ,\,\cdot\,) 
\\ 
J(\Theta \dt + \xi(x,z) \circ \dd W_t ,\,\cdot\,)  & 0
\end{bmatrix}
\begin{bmatrix}
{\delta {\mc H}_{EBC} } /{\delta \omega} = \psi 
\\ 
{\delta ({\mc H}_{EBC})}/{\delta \Theta} = (\Theta - z)
\end{bmatrix}
\\&= - 
\begin{bmatrix}
(\bu\cdot \nabla)\big(\omega \dt + \xi(x,z) \circ \dd W_t \big) 
- \nabla^\perp(\Theta - z)\cdot \nabla\big(\Theta \dt + \xi(x,z) \circ \dd W_t \big) 
\\ 
(\bu\cdot \nabla) \big(\Theta \dt + \xi(x,z) \circ \dd W_t \big)  
\end{bmatrix}
\,,
\end{split}
\label{EBC-SFLTeqn}
\end{align}
with velocity $\bu =\nabla^\perp \psi$. Because the Hamiltonian matrix operator for the  Lie-Poisson bracket is skew-symmetic under the $L^2$ pairing, we now have ${\rm d}{\mc H}_{EBC}=0$.
Thus, the SFLT formulation of EBC preserves the deterministic energy Hamiltonian, ${\mc H}_{EBC}$, even though its dynamics is stochastic. In contrast, the SALT approach preserves the Casimirs, but not the energy. The application of SFLT in EBC is an open problem, which will be deferred until later. 
\\

\subsubsection*{The EBC `climate' according to Lagrangian Averaged SALT (LA SALT)}

One may regard the SALT EBC equations as a minimal model for climate science, by creating the Lagrangian Averaged SALT (LA SALT)  EBC equations. This is done by modifying the SALT EBC equations in \eqref{EBC-LPHam-SALT} to replace the stream function $(\psi)$ for the drift velocity by its expectation $\expect(\psi)$, in the Hamiltonian formulation, following \cite{Alonso-Oran-etal-JStaPhys}, based on a method introduced in \cite{McKean1966}. 

\begin{equation}
\dd
\begin{bmatrix}
\omega \\ \Theta
\end{bmatrix}
=
\begin{bmatrix}
J(\omega,\,\cdot\,) & J(\Theta,\,\cdot\,) 
\\ 
J(\Theta,\,\cdot\,)  & 0
\end{bmatrix}
\begin{bmatrix}
\expect(\psi) {\rm d}t + {\chi}(x,z) \circ \dd W_t
\\ 
(\Theta - z)\dt
\end{bmatrix}
\,.
\label{EBC-LPHam-LASALT}
\end{equation}
Although the LA SALT EBC system in \eqref{EBC-LPHam-LASALT} is no longer Hamiltonian, it still preserves the same Casimir functionals as for SALT EBC, because it possesses the same Lie-Poisson bracket structure as for the SALT EBC system. In this section, we will write out the dynamical equations for expectations $(\expect(\omega),\expect(\Theta))$ and fluctuations $(\omega',\Theta')$, as well as the evolution equations for the statistics of the LA SALT  EBC system in \eqref{EBC-LPHam-LASALT}. Our treatment will follow the work for the Boussinesq case in \cite{Alonso-Oran-etal-JStaPhys}. This section also links well with the SALT and LA SALT treatments of the Lorenz-63 model of EBC in \cite{GHL2019}. 

In standard form, the LA SALT version of the EBC equations \eqref{EBC-eqns-def1} for convection in a planar vertical slice is given by
\begin{align}
\begin{split}
{\rm d} \omega + \big(\expect(\bu)\, \dt + \bm{\xi}_i \circ \dd W^i \big)\cdot \nabla \omega
&= \Theta_x\,\dt
\,,
\\
{\rm d} \Theta + \big(\expect(\bu)\, \dt + \bm{\xi}_i \circ \dd W^i \big)\cdot \nabla \Theta &= 0
\,,
\end{split}
\label{EBC-eqns-LASALT}
\end{align}
where 
\[
\expect(\bu) :=  \nabla^\perp\expect(\psi) = (-\,\expect(\psi)_z\,,\, \expect(\psi)_x)\,,
\quad\hbox{and}\quad
\bm{\xi}_i := \nabla^\perp{\chi}_i = (-\,{\chi}_{i,z}\,,\, {\chi}_{i,x})\,,
\]
The boundary conditions are the same as before and the noise terms should be compatible with those boundary conditions.

Upon passing to the It\^o formulation, the LA SALT EBC equations \eqref{EBC-eqns-LASALT} transform into 
\begin{align}
\begin{split}
{\rm d} \omega + \expect(\bu)\cdot \nabla \omega\, \dt 
+ \bm{\xi}_i  \cdot \nabla \omega \dd W^i 
&= \Theta_x\,\dt +  \frac12 \bm{\xi}_i\cdot\nabla\big(\bm{\xi}_i\cdot\nabla \omega\big) \dt
\,,
\\
{\rm d} \Theta + \expect(\bu)\cdot \nabla \Theta\, \dt 
+ \bm{\xi}_i   \cdot \nabla \Theta \dd W^i &=  \frac12 \bm{\xi}_i\cdot\nabla\big(\bm{\xi}_i\cdot\nabla \Theta\big) \dt
\,.
\end{split}
\label{EBC-eqns-LASALT-Ito}
\end{align}
Then taking the expectation of the It\^o form yields a closed system for the expected solutions, as
\begin{align}
\begin{split}
\frac{\p}{\p t} \expect(\omega) + \expect(\bu)\cdot \nabla \expect(\omega) 
&= \expect(\Theta)_x +  \frac12 \bm{\xi}_i\cdot\nabla\big(\bm{\xi}_i\cdot\nabla \expect(\omega)\big)
\,,
\\
\frac{\p}{\p t}\expect(\Theta) + \expect(\bu)\cdot \nabla \expect(\Theta)
&=  \frac12 \bm{\xi}_i\cdot\nabla\big(\bm{\xi}_i\cdot\nabla \expect(\Theta)\big)
\,.
\end{split}
\label{EBC-eqns-LASALT-expect}
\end{align}
Define the fluctuations as $\omega' = \omega - \expect(\omega)$ and $\Theta' = \Theta - \expect(\Theta)$. Taking the corresponding differences in the two previous equation sets then yields the following set of linear fluctuation equations,
% \todo[inline]{WP: the equations below seem to be missing $dt$ and the $dW$ terms}
\begin{align}
\begin{split}
\dd\omega' + \expect(\bu)\cdot \nabla \omega' \dt 
+ {\color{black} \bm{\xi}_i \cdot \nabla \omega \dd W^i} 
&= \Theta'_x \dt +  \frac12 \bm{\xi}_i\cdot\nabla\big(\bm{\xi}_i\cdot\nabla \omega'\big) \dt
\,,
\\
\dd \Theta' + \expect(\bu)\cdot \nabla\Theta'\dt +
{\color{black}  \bm{\xi}_i \cdot \nabla \Theta \dd W^i} 
&=  \frac12 \bm{\xi}_i\cdot\nabla\big(\bm{\xi}_i\cdot\nabla \Theta'\big)\dt
\,.
\end{split}
\label{EBC-eqns-LASALT-fluct}
\end{align}
One realises that the equations \eqref{EBC-eqns-LASALT-expect} and \eqref{EBC-eqns-LASALT-fluct} provide the \emph{entire history} of the solutions for the expectations $\expect(\omega)$ and $\expect(\Theta)$ throughout the domain of flow. Once the expectation equations \eqref{EBC-eqns-LASALT-expect} have been solved, the equations for the instantaneous stochastic variables \eqref{EBC-eqns-LASALT-fluct} become \emph{linear} It\^o stochastic transport equations, which are driven by the solutions of equations \eqref{EBC-eqns-LASALT-expect}, whose entire history is obtained separately. Then one realises that the coupled system \eqref{EBC-eqns-LASALT-expect} and \eqref{EBC-eqns-LASALT-fluct} is closed, since the variables $(\omega,\Theta)$ and their corresponding variational derivatives are related linearly. These linear relations arise because the deterministic part $H_{EBC}$ of the Hamiltonian ${\rm d}h_{EBC}$ in \eqref{SALT-HamShift-EBC} is quadratic; so, its variations are linear.  

The fluctuation equations, in turn, imply evolutionary equations for the statistical moments of the LA SALT EBC equations; for example, the evolutionary equations for the covariances $\expect(\omega'\omega') = \omega^{(2)}$ and $\expect(\Theta'\Theta')=\Theta^{(2)}$. 

\begin{equation}
\left\{
\begin{array}{rl}
    \partial_{t} \omega^{(2)} + \E{\bu}\cdot \nabla \omega^{(2)}  
    &= 
    \displaystyle\sum_k \left(\frac12 (\bm{\xi}_k \cdot \nabla)^2 \omega^{(2)} + \Big((\bm{\xi}_k\cdot \nabla) \E{\omega}\Big)^2\right) 
    + { {\bf 2}} \E{\omega' \Theta'_x },  \\
    \partial_{t}\Theta^{(2)}+ \E{\bu}\cdot \nabla \Theta^{(2)} 
    &= 
    \displaystyle\sum_k \left(\frac12 (\bm{\xi}_k \cdot \nabla)^2 \Theta^{(2)} +  \Big(\big(\bm{\xi}_k\cdot \nabla) \E{\Theta}\Big)^2\right) ,
\end{array}
\right.
\label{omega-theta-covar-eqs}
\end{equation}
where $\omega^{(2)} := \E{(\omega')^2}$ and $\Theta^{(2)} := \E{(\Theta')^2}$. To close system \eqref{omega-theta-covar-eqs} one requires an evolution equation for the expected term $\E{\omega'    \Theta'_x }$ in the $\omega^{(2)}$ equation. The required evolution equation can be sought by taking the partial time derivative of $\E{\omega'   \Theta'_x }$, applying the stochastic product rule and substituting from the fluctuation evolution equations in \eqref{EBC-eqns-LASALT-fluct}. This calculation eventually yields the following equation, 
\begin{align}
\begin{split}
    &\partial_t \E{\omega'\Theta'_x} + \E{\bu}\cdot \nabla \E{\omega'\Theta'_x}  \\
    &= 
    \frac12 (\bm{\xi}_i \cdot \nabla)^2  \E{\omega'\Theta'_x} + (\bm{\xi}_i\cdot\nabla\E{\omega})( \bm{\xi}_i\cdot\nabla\E{\Theta_x})
    + \E{\Xi} 
    + \E{(\Theta'_x)^2} 
\,,\end{split}
\label{EBC-eqns-LASALT-moment-dyn}
\end{align}
where the quantity $\Xi$ in the moment dynamics \eqref{EBC-eqns-LASALT-moment-dyn} is defined to be,
\begin{equation}
    \Xi = - \omega'\partial_x\E{\bu}\cdot \nabla\Theta'
    % \dt 
    +  \frac12\omega'
    \left(
        \partial_x(\bm{\xi}_i)\cdot\nabla\big(\bm{\xi}_i\cdot\nabla \Theta'\big)
        +  
        % \frac12\omega'
        \bm{\xi}_i\cdot\nabla\big(\partial_x\bm{\xi}_i\cdot\nabla \Theta'\big)
    \right)
    % \dt
    +  (\partial_x\bm{\xi}_i \cdot \nabla \Theta)(\bm{\xi}_i \cdot \nabla \omega) 
    \,.% \dt
    \label{Xi-def}
\end{equation}

The last term  $\E{(\Theta'_x)^2}=(\Theta_x')^{(2)}$ in \eqref{EBC-eqns-LASALT-moment-dyn} is the covariance of the temperature gradient $\Theta'_x$, which drives the vorticity dynamics in \eqref{EBC-eqns-LASALT}. The necessary evolution equation for $(\Theta_x')^{(2)}$ can be derived by applying $\p_x$ to the second equation in \eqref{EBC-eqns-LASALT-fluct} and recalculating as before, leading to
\begin{align}
\begin{split}
    &
    \partial_t\E{(\Theta'_x)^2}\\
    &= 
    -2\E{\Theta'_x\partial_x(\E{\bu}\cdot\nabla \Theta')} 
    % -2\Theta'_x\partial_x (\bm{\xi}_i \cdot \nabla \Theta) \dd W^i
    + \E{\Theta'_x \partial_x (\bm{\xi}_i \cdot \nabla (\bm{\xi}_i \cdot \nabla \Theta'))}
    + \E{(\partial_x (\bm{\xi}_i \cdot \nabla \Theta))^2} 
\,,\end{split}
\end{align}
which includes covariances of higher order gradients of $\Theta$.
Thus, the evolution equations for the covariances of gradients of LA SALT flow variables do not close. However, if the LA SALT evolution equation for $\Theta$ in \eqref{EBC-eqns-LASALT-Ito} is wellposed, then one may glean some information about the evolution of the statistical properties of the LA SALT solutions by simulating different $\Theta_x$ solutions and estimating their moments.
\smallskip

We have the following results for the LA SALT EBC equations: 

(i) The expectations of the solutions $\expect(\omega)$ and $\expect(\Theta)$ satisfy a closed nonlinear chaotic system that is isomorphic to the deterministic EBC equations with dissipation in equation \eqref{DOBeqns}. The initial value equation for these equations has a global strong solution, \cite{WenCharlie-etal2020}.

(ii) The stochastic evolution of the fluctuations in \eqref{EBC-eqns-LASALT-fluct} is linear.

(iii) The evolution equations for the covariances of the fluctuations $\expect(\omega'\omega')$, $\expect(\Theta'\Theta')$, $\E{\omega'  \Theta'_x }$,  and $\expect(\Theta_x'\Theta_x')$ form a system of partial differential equations (PDE) which is not closed.  \smallskip

Computational simulations of the PDE \eqref{omega-theta-covar-eqs} and \eqref{EBC-eqns-LASALT-moment-dyn} for the covariance dynamics of the LA SALT EBC fluctuations will be left as another open problem which is to be deferred until later, after the data calibration and assimilation steps in the SALT procedure have been accomplished. As a finite dimensional start, though, the investigation of the LA SALT expectation and fluctuations for the Lorenz-63 model of RBC have already been initiated in \cite{GHL2019}.
\bigskip

\section{Other open problems}\label{sec: Conclusion}

The previous sections have culminated by introducing the SALT, SFLT and LA SALT versions of the SEBC equations for convection in a vertical plane.  The deterministic EBC instability occupies a prominent position in the literature of mathematical analysis -- particularly in the contributions of CR Doering. 

Hopefully, the introduction of stochastic Euler-Boussinesq convection (SEBC) will stimulate more progress for stochastic analysis of convection. Much in the analysis of these equations has already been accomplished. 
The local existence, uniqueness and well-posedness of weak solutions for SEBC as well as  blow-up criteria in the sense of \cite{BKM1984} have been proved  in \cite{Alonso-Oran-bDLeon-JNLS2020}. 
Proof of the global well-posedness of strong solutions of the LA SALT version of the SEBC equations has been accomplished in \cite{Alonso-Oran-etal-JStaPhys}. 
However, as we have emphasised, the analysis of the SFLT version of SEBC remains as an open problem. 

Much more also remains to be done in the application of SEBC in data assimilation. For example, one could follow the plan of theoretical and computational analysis of the SALT procedure, as developed for the quasi-geostrophic model in \cite{CHLMP2021}. This is work in progress.

Finally, we mention that (i) the SALT approach discussed here has recently been generalised to the theory of rough paths in \cite{CHLN2022}; and (ii) the LA SALT approach has been applied and analysed in classic climate models of ocean-atmosphere introduction in 
\cite{CHK-2022}.

%%%%%%%%%%%%%%%%%%%%%%%%%%%%%%%
%\todo[inline]{DH: Include outline that has been commented out below in the .tex file.}
%%%%%%%%%%%%%%%%%%%%%%%%%%%%%%%

\begin{comment}

\begin{enumerate}
    \item Physical considerations (Hamiltonian formulation of EBC), its linear wave structure, high $|\bk|$ instability, linear well-posedness) 
    \item Model SEBC, $\alpha$-SEBC?
    \begin{itemize}
    \item
    Motivation and objectives - plan
    \item Include alpha-EBC?
    \item Stochastic Hamiltonian formulation
    \item Prominent position of EBC instability in the literature of mathematical analysis, particularly the contributions of CRD -- introduction of stochastic Rayleigh-B\'enard convection (SEBC).
    \end{itemize}
    \item Assumptions (strong solutions) -- review deterministic results
    \begin{itemize}
         \item work on the torus
        \item strong solution/weak solution/maximal solution definitions
        \item space of solutions 
    \end{itemize}
    \item Theoretical analyis -- for the stochastic case  
    \begin{itemize}
        \item Well-posedness (apply the general criterion)
        \item blow-up criteria (BKM?)
    %    \item auto-regularizations (with viscosity)
        \item global solutions, etc.     
        \end{itemize}

    \item Numerics (time numerical consistency?) 
    \item Conclusion and further work 
\end{enumerate}

\end{comment}

\subsection*{Acknowledgements}
This paper was written in appreciation of the late Charlie Doering's enthusiastic contributions to the mathematical analysis of fluid convection dynamics over the many years of his marvelous career.
We are grateful to our friends, colleagues and collaborators for their advice and encouragement in these matters. DH is grateful to P. Constantin, J. D. Gibbon, C. D. Levermore, E. S. Titi and our late friend C. Foias for shared moments of inspiration and camaraderie with Charlie Doering. Both DH and WP thank C. J. Cotter, D. Crisan, I. Shevchenko and our colleagues in the STUOD project for collaborations in developing the SALT algorithm. We also thank R. Hu and S. Patching for developing the SFLT algorithm, and we thank D. Crisan, T. D. Drivas and J.-M. Leahy for help with developing the LA SALT algorithm. DH and WP have been partially supported during the present work by European Research Council (ERC) Synergy grant STUOD - DLV-856408.

%%%%%%%%%%%%%%%%%%%%%%%%%%%%%%%%%%%%%%%%%%%%%
%%%%%%%%%%%%%%%%%%%%%%%%%%%%%%%%%%%%%%%%%%%%%

\appendix 
\section{Lie-Poisson Hamiltonian structure of deterministic EBC}\label{appendix-A}

\subsection{Geometric properties of the Lie-Poisson bracket for EBC dynamics}

\begin{proposition}[Jacobi identity]\label{LPB-EBC5}
The Lie-Poisson bracket in \eqref{short-note} satisfies the Jacobi identity.  
\end{proposition}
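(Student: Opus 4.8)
The plan is to verify the Jacobi identity for the semidirect-product Lie-Poisson bracket
\[
\{F,G\}(\omega,\Theta) = \int_{\mc D} \omega\, J\!\left(\frac{\delta F}{\delta\omega},\frac{\delta G}{\delta\omega}\right)
+ \Theta\left[J\!\left(\frac{\delta F}{\delta\omega},\frac{\delta G}{\delta\Theta}\right) + J\!\left(\frac{\delta F}{\delta\Theta},\frac{\delta G}{\delta\omega}\right)\right] \dd x\dd z
\]
by exhibiting it as the Lie-Poisson bracket associated with a genuine Lie algebra, so that the Jacobi identity is inherited from the Jacobi identity of that Lie bracket. Concretely, I would recall that the Jacobian $J(\cdot,\cdot)$ defined in \eqref{def-Jac} is (up to sign) the Lie bracket of Hamiltonian vector fields on the plane: the map $f\mapsto \nabla^\perp f$ sends $J(f,g)$ to $-[\nabla^\perp f,\nabla^\perp g]$, so $(C^\infty(\mc D),J)$ is a Lie algebra (the symplectic/area-preserving vector fields represented by stream functions). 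The bracket above is then precisely the $(-)$ Lie-Poisson bracket on the dual of the semidirect-product Lie algebra $\mathfrak g = f_1 \circledS f_2$, where $f_1$ acts on $f_2$ by the same Jacobian action; this is exactly the structure referenced around \eqref{EBC-brkt3}.

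The key steps, in order, would be: (1) state the semidirect-product Lie algebra structure on $\mathfrak g = f_1\circledS f_2$ with bracket $[(a_1;a_2),(b_1;b_2)] = (J(a_1,b_1)\,;\,J(a_1,b_2) - J(b_1,a_2))$ (signs to be fixed to match \eqref{EBC-LPHam}), and verify it satisfies the Jacobi identity — this reduces to the Jacobi identity for $J$ itself, which is the classical fact that $\partial(a,b,c):=J(J(a,b),c)+J(J(b,c),a)+J(J(c,a),b)=0$, provable by direct expansion or by appealing to the Lie-algebra-of-vector-fields interpretation; (2) invoke the general theorem (e.g.\ \cite{HMR1998,HMRW1985}) that for any Lie algebra $\mathfrak g$, the Lie-Poisson bracket $\{F,G\}(\mu) = \pm\langle \mu, [\delta F/\delta\mu,\delta G/\delta\mu]\rangle$ on $\mathfrak g^*$ satisfies the Jacobi identity; (3) identify $\mu = (\omega;\Theta)$, compute $[\delta F/\delta\mu,\delta G/\delta\mu]$ using variational derivatives $(\delta F/\delta\omega;\delta F/\delta\Theta)$, pair against $(\omega;\Theta)$, and check that the result equals the bracket displayed above, thereby matching \eqref{EBC-LPHam}–\eqref{short-note}; (4) conclude Jacobi for $\{\cdot,\cdot\}$.

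The main obstacle is step (3): matching the abstract Lie-Poisson formula to the explicit matrix operator in \eqref{EBC-LPHam} requires care with the functional-derivative chain rule, because the Jacobi identity for functionals (as opposed to the finite-dimensional case) involves second variational derivatives, and one must confirm these contributions cancel. The clean way around this is to use the standard lemma that the Leibniz/Jacobi structure of a Lie-Poisson bracket on a function space depends only on the underlying Lie bracket being a true Lie bracket and on the bracket being expressible via the coadjoint (ad$^*$) operator — which is exactly how the dynamics was written in \eqref{SDP-EBC-LP-ad-star}–\eqref{SDP-EBC-LP-SALT}. So I would lean on the ad$^*$ formulation already introduced and cite the general Lie-Poisson reduction theorem, reducing the entire proof to verifying the Jacobi identity for the Jacobian $J$ together with the semidirect-product compatibility (the action of $f_1$ on $f_2$ being by Lie-algebra derivations, which again follows from $J$ being a Lie bracket). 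If a self-contained argument is wanted instead, I would simply expand $\partial(a,b,c)=0$ for $J$ directly in coordinates — tedious but elementary — and then note the semidirect-product extension adds only terms of the same form, which vanish by the same identity.
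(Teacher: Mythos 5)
Your proposal is correct and follows essentially the same route as the paper: identify $J$ as the Lie bracket of symplectic (stream-function) vector fields, recognise the bracket as the Lie-Poisson bracket on the dual $(f_1\circledS f_2)^*$ of the semidirect-product Lie algebra, write the dynamics via $\operatorname{ad}^*$, and inherit the Jacobi identity from the underlying Lie-algebra commutator — which is exactly the paper's argument. The only caveat, which you already flag, is fixing the sign conventions so your displayed bracket matches \eqref{EBC-brkt3}; this does not affect the validity of the argument.
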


\begin{proof}
This proposition could be demonstrated by direct computation using the well-known properties of the Jacobian of function pairs. The proof given here, though, will illustrate the geometric properties of the EBC system in \eqref{EBC-eqns-def3} and thereby place it into the wider class of ideal fluid dynamics with advected quantities. In particular, the proof will identify the Poisson bracket in \eqref{short-note} as being defined over domain $\mc D$ on functionals of the \emph{dual space}\footnote{Dual with respect to the $L^2$ pairing on $\mc D$} $(f_1\circledS f_2)^*$ of the Lie algebra  $(f_1\circledS f_2)$ of semidirect-product symplectic transformations. Thus, EBC dynamics is understood as coadjoint motion generated by the semidirect-product action of the Lie algebra $(f_1\circledS f_2)$ on function pairs $(f_1;f_2)\in (f_1\circledS f_2)^*$. This proof of coadjoint motion also identifies the potential vorticity and buoyancy, $(\omega;b)$, as a semidirect-product momentum map.  

The Lie algebra commutator $[\,\cdot\,,\,\cdot\,]$ action for the adjoint (ad) representation of the action of the Lie algebra $f_1\circledS f_2$ on itself is defined by 
\begin{align}
{\rm ad}_{\big(\overline{f}_1;\overline{f}_2\big)}\big(f_1;f_2\big) 
=
\Big[ \big(f_1;f_2\big) , \big(\overline{f}_1;\overline{f}_2\big)\Big] 
:=
\Big( \big[ f_1,\overline{f}_1 \big] ; \big[ f_1,\overline{f}_2 \big] - \big[ \overline{f}_1,f_2 \big]\Big)\,,
\label{SDP-ad-action-fns}
\end{align}
where the commutator $[\,\cdot\,,\,\cdot\,]$ is given by the Jacobian of the functions $f_1$ and $\overline{f}_2$. For example,
\begin{align}
[ f_1,\overline{f}_2 ]:= J(f_1,\overline{f}_2)
\,,\label{SDP-commutator-fns}
\end{align}
which is also the commutator of symplectic vector fields. Thus, the adjoint (ad) action in \eqref{SDP-ad-action-fns} is the semidirect product Lie algebra action among symplectic vector fields. 

The definition in \eqref{SDP-ad-action-fns} of the semidirect product Lie algebra action among functions defined on the plane $\mathbb{R}^2$ enables the Lie-Poisson bracket \eqref{EBC-brkt3} for functionals of $(\omega,\Theta)$ defined on domain $\mc D$ to be identified with the coadjoint action of this Lie algebra. This is because the variational derivatives of such functionals live in the Lie algebra of symplectic vector fields on domain $\mc D$. Thus,
\begin{align}
\begin{split}
 \frac{\dd }{\dt} \mc{F}(\omega,\Theta)
 =
\Big\{\mc F\,,\,{\mc H}\Big\}(\omega,\Theta)
&=
- \Bigg\langle
\big(\omega ; \Theta\big) \,,\,\Bigg[ \bigg(\frac{\delta \mc F}{\delta \omega} ; \frac{\delta \mc H}{\delta \Theta}\bigg), 
\bigg(\frac{\delta \mc H}{\delta \omega} ;\frac{\delta \mc F}{\delta \Theta} \bigg)\Bigg] \Bigg\rangle
\\&=:
- \Bigg\langle \big(\omega ; \Theta\big)
\,,\,
{\rm ad}_{\big(\frac{\delta \mc H}{\delta \omega};\frac{\delta \mc H}{\delta \Theta}\big)}
\bigg(\frac{\delta \mc F}{\delta \omega};\frac{\delta \mc F}{\delta \Theta}\bigg)
 \Bigg\rangle
\\&=:
- \Bigg\langle {\rm ad}^*_{\big(\frac{\delta \mc H}{\delta \omega};\frac{\delta \mc H}{\delta \Theta}\big)}\big(\omega ; \Theta\big)
\,,\,
\bigg(\frac{\delta \mc F}{\delta \omega};\frac{\delta \mc F}{\delta \Theta}\bigg)
 \Bigg\rangle
\,,
\end{split}
\label{SDP-LPB-EBC}
\end{align}
in which the angle brackets $\langle\,\cdot\,,\,\cdot\,\rangle$ represent the pairing of the Lie algebra of functions $f_1\circledS f_2$ with its dual Lie algebra $(f_1\circledS f_2)^*$ via the $L^2$ pairing on the domain $\mc D$ as in equation \eqref{EBC-brkt3}. Consequently, we may rewrite the EBC equations in \eqref{EBC-eqns-def1} in terms of this coadjoint action and thereby reveal its geometric nature,
\begin{align}
\frac{\partial \big(\omega ; \Theta\big)}{\partial t}
= - \,{\rm ad}^*_{\big(\frac{\delta \mc H}{\delta \omega};\frac{\delta \mc H}{\delta \Theta}\big)}
\big(\omega ; \Theta\big)
= - \,{\rm ad}^*_{\big(\psi\,;\, (\Theta -z)\big)}
\big(\omega ; \Theta\big)
\,.
\label{SDP-EBC-LP-ad-star}
\end{align}
Thus, EBC dynamics is governed by the semidirect-product coadjoint action shown in equation \eqref{SDP-LPB-EBC} of symplectic vector fields acting on the momentum map $(\omega ; \Theta)$ defined in the dual space of potential vorticity and buoyancy functions . 
\smallskip

For completeness, we write the coadjoint (ad$^*$) representation of the (right) action of the Lie algebra $f_1\circledS f_2$ on its dual Lie algebra explicitly in terms of the Jacobian operator between pairs of functions as 
\begin{align}
\begin{split}
\Big\langle \big(\omega ; \Theta\big) \,,\,
{\rm ad}_{\big(\overline{f}_1;\overline{f}_2\big)}\big(f_1;f_2\big) 
\Big\rangle
&:= -\,
\Big\langle \big(\omega ; \Theta\big) \,,\,
\Big[ \big(f_1;f_2\big) , \big(\overline{f}_1;\overline{f}_2\big)\Big] 
\Big\rangle
\\
\hbox{By \eqref{SDP-ad-action-fns} and \eqref{SDP-commutator-fns}}\quad
&:= -\,
\Big\langle \big(\omega ; \Theta\big) \,,\,
\Big( J\big( f_1,\overline{f}_1 \big) ; J\big( f_1,\overline{f}_2 \big) 
- J\big( \overline{f}_1,f_2 \big)\Big)
\Big\rangle
\\
\hbox{Defining the SDP pairing}\quad
&:= -\,
\int_{\mc D} 
\omega \,J\big( f_1 , \overline{f}_1 \big)
+ \Theta \Big( J\big( f_1,\overline{f}_2 \big) - J\big( \overline{f}_1,f_2 \big) \Big)
\,\dd x \dd y
\\& = -\,
\int_{\mc D} 
\omega \,J\big( f_1,\overline{f}_1 \big)
+  \Big( \Theta J\big( f_1,\overline{f}_2 \big) + \Theta J\big( f_2 , \overline{f}_1 \big) \Big)
\,\dd x \dd y
\\\hbox{By integrating by parts}\quad
& = 
\int_{\mc D} 
f_1\,\Big( J\big( \omega ,\overline{f}_1 \big) +  J\big( \Theta  ,\overline{f}_2 \big)\Big) 
- f_2J\big( \Theta , \overline{f}_1 \big) 
\,\dd x \dd y
\\
\hbox{Upon the SDP pairing}\quad
&= 
\Big\langle 
\Big( J\big( \omega,\overline{f}_1 \big) + J\big( \Theta ,\overline{f}_2\big)  \,;\, - J\big( \overline{f}_1,\Theta \big)
\Big)
\,,\,
\big( f_1 ; f_2\big) 
\Big\rangle
\\
\hbox{By \eqref{SDP-LPB-EBC}}\quad
&=: 
\Big\langle {\rm ad}^*_{\big(\overline{f}_1;\overline{f}_2\big)}\big(\omega ; \Theta\big) \,,\,
\big(f_1;f_2\big) 
\Big\rangle
\quad\hbox{with}\quad
\big(\overline{f}_1;\overline{f}_2\big)
=\Big(\frac{\delta \mc H}{\delta \omega};\frac{\delta \mc H}{\delta \Theta}\Big)
\,.
\end{split}
\label{SDP-adstar-action-fns}
\end{align}
$\bullet$  The calculation in \eqref{SDP-adstar-action-fns} confirms that the bracket in \eqref{EBC-brkt3} satisfies the Jacobi identity, by being a linear functional of a Lie algebra commutator which satisfies the Jacobi identity. 

\noindent
$\bullet$  The calculation in \eqref{SDP-adstar-action-fns} identifies the bracket in \eqref{EBC-brkt3} as the Lie-Poisson bracket for functionals of $(\omega ; \Theta)$ defined on the dual $(f_1\circledS f_2)^*$ of semidirect-product Lie algebra $f_1\circledS f_2$, whose commutator is defined in equation  \eqref{SDP-ad-action-fns}. 
\end{proof}

\subsection{Casimir conservation laws for deterministic EBC flows}
We are now in a position to explain the conservation laws for the Hamilton matrix operator in \eqref{EBC-LPHam}. Namely, those conservation laws comprise Casimir functions $\mc{C}_{\Phi,\Psi}(\omega,\Theta)$ whose variational derivatives are null eigenvectors of the Hamilton matrix operator in \eqref{EBC-eqns-Ham} which defines the \emph{semidirect-product Lie-Poisson bracket} as
 \begin{align}
  \begin{split}
 \frac{\dd }{\dt} \mc{F}(\omega,\Theta)
 &=
\int_{\mc D}
\begin{bmatrix}
{\delta \mc F}/{\delta \omega} 
\\ 
{\delta \mc F}/{\delta \Theta} 
\end{bmatrix}^T
\begin{bmatrix}
J(\omega,\,\cdot\,) & J(\Theta,\,\cdot\,) 
\\ 
J(\Theta,\,\cdot\,)  & 0
\end{bmatrix}
\begin{bmatrix}
{\delta \mc H}/{\delta \omega} 
\\ 
{\delta \mc H}/{\delta \Theta} 
\end{bmatrix}
\dd x\dd z
=:
\Big\{\mc F\,,\,{\mc H}\Big\}(\omega,\Theta)
\\&=-
\int_{\mc D} 
\omega \,J\bigg(\frac{\delta \mc F}{\delta \omega},\frac{\delta \mc H}{\delta \omega}\bigg)
+ \Theta \Bigg[J\bigg(\frac{\delta \mc F}{\delta \Theta},\frac{\delta \mc H}{\delta \omega}\bigg)
- J\bigg(\frac{\delta \mc F}{\delta \omega},\frac{\delta \mc H}{\delta \Theta}\bigg)\Bigg] \dd x\dd z
\,.
\end{split}
\label{EBC-brkt3}
\end{align}
In particular, when $\mc{F}(\omega,\Theta)=\mc{C}_{\Phi,\Psi}(\omega,\Theta)$ we have 
\begin{align}
  \begin{split}
 \frac{\dd }{\dt} \mc{C}_{\Phi,\Psi}(\omega,\Theta)
 &=
\Big\{\mc{C}_{\Phi,\Psi}\,,\,{\mc H}\Big\}(\omega,\Theta)
\\&=-
\int_{\mc D} 
\omega \,J\bigg(\Psi(\Theta),\frac{\delta \mc H}{\delta \omega}\bigg)
+  \Bigg(\Theta J\bigg(\Phi'(\Theta)+\omega\Psi'(\Theta),\frac{\delta \mc H}{\delta \omega}\bigg)
- \Theta J\bigg(\Psi(\Theta),\frac{\delta \mc H}{\delta \Theta}\bigg)\Bigg) \dd x\dd z
\\&=-
\int_{\mc D} 
\frac{\delta \mc H}{\delta \omega}J\bigg(\omega ,\Psi(\Theta)\bigg)
+ \Bigg( \frac{\delta \mc H}{\delta \omega}J\bigg(\Theta,\Phi'(\Theta)+\omega\Psi'(\Theta)\bigg)
- \frac{\delta \mc H}{\delta \Theta}J\bigg(\Theta,\Psi(\Theta)\bigg)\Bigg) \dd x \dd y
\,.
\\&=-
\int_{\mc D} 
\frac{\delta \mc H}{\delta \omega}\Big(J\big(\omega ,\Theta\big)
+  J\big(\Theta,\omega\big)\Big)
\Psi'(\Theta) \dd x \dd y
\\& = 0 \quad\hbox{for all}\quad {\mc H}(\omega,\Theta)
\,.
\end{split}
\label{EBC-CasimirBrkt3}
\end{align}
This calculation demonstrates the conservation of $\mc{C}_{\Phi,\Psi}(\omega ,\Theta)$ in \eqref{EBC-LPHam} for an arbitrary Hamiltonian ${\mc H}(\omega,\Theta)$. These quantities are called \emph{Casimir functions} and as shown in \eqref{EBC-CasimirBrkt3} their conservation follows because their variational derivatives comprise null eigenvectors of the semidirect-product Lie-Poisson bracket in \eqref{EBC-CasimirBrkt3}. Now,  Lie-Poisson brackets generate coadjoint orbits, and coadjoint orbits lie on level sets of the bracket's Casimir functions. The requirement that EBC motion takes place on level sets of the Casimir functions limits the function space available to EBC solutions. In particular, the EBC solutions are restricted to stay on the same Casimir level set $\mc{C}_{\Phi,\Psi}(\omega ,\Theta)=const$ as that of their initial conditions.

\section{EBC critical-point equilibria: linear Lyapunov stability conditions}\label{appendix-B}   
\subsection{\bf Equilibrium conditions for EBC}
At equilibrium, equations \eqref{EBC-eqns-def3} imply that $\psi$, $\Theta$ and $\omega$ are all functions of $z$. Thus, we may denote the equilibrium solutions as $\psi_e(z)$, $\Theta_e(z)$, $\omega_e(z)$. 
Equilibrium solutions of the EBC equations have recently been characterised in \cite{WenCharlie-etal2020}.
Here we study the class of EBC equilibria which are critical points of a certain constrained energy. We then study 
their linear Lyapunov stability conditions using the energy-Casimir approach and determine their linear instability properties by deriving the EBC version of the Taylor-Goldstein equation for stratified incompressible planar fluid flow. \medskip

\begin{proposition}[Existence of critical-point equilibria]\label{CritPtCond}
Critical points of the sum $\mc{H}_C:=\mc{H}_{EBC} + \mc{C}_{\Phi,\Psi}$ with $\mc{H}_{EBC}$ and $\mc{C}_{\Phi,\Psi}$ defined in \eqref{EBC-eqns-Ham} and \eqref{eq:casimirsEBC} are equilibrium solutions of the EBC system in \eqref{EBC-eqns-def1}.
\end{proposition}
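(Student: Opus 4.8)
The plan is to run the energy--Casimir (Arnold) argument. First I would compute the first variation of $\mc{H}_C = \mc{H}_{EBC}+\mc{C}_{\Phi,\Psi}$. From \eqref{EBC-eqns-Ham} one reads off $\delta\mc{H}_{EBC}/\delta\omega = \Delta^{-1}\omega = \psi$ and $\delta\mc{H}_{EBC}/\delta\Theta = \Theta-z$ (the $-z\Theta$ term contributes $-z$; the completed-square form in \eqref{EBC-eqns-Ham} differs from this only by the Casimir $\tfrac12\int\Theta^2\,\dd x\dd z$, which does not affect the dynamics), and from \eqref{eq:casimirsEBC} one gets $\delta\mc{C}_{\Phi,\Psi}/\delta\omega = \Psi(\Theta)$ and $\delta\mc{C}_{\Phi,\Psi}/\delta\Theta = \Phi'(\Theta)+\omega\Psi'(\Theta)$. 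Setting $\delta\mc{H}_C = 0$ against arbitrary admissible variations $(\delta\omega,\delta\Theta)$ then yields the two critical-point relations
\[
\psi_e = -\,\Psi(\Theta_e)\,,\qquad z - \Theta_e = \Phi'(\Theta_e) + \omega_e\,\Psi'(\Theta_e)\,.
\]

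Next I would insert these relations back into the EBC dynamics. The cleanest route is via the Lie--Poisson form \eqref{SDP-EBC-LP-ad-star}, $\partial_t(\omega\,;\,\Theta) = -\,{\rm ad}^*_{(\delta\mc{H}_{EBC}/\delta\omega\,;\,\delta\mc{H}_{EBC}/\delta\Theta)}(\omega\,;\,\Theta)$: at a critical point the generating pair equals $-\big(\delta\mc{C}_{\Phi,\Psi}/\delta\omega\,;\,\delta\mc{C}_{\Phi,\Psi}/\delta\Theta\big)$, so $\partial_t(\omega_e\,;\,\Theta_e) = {\rm ad}^*_{(\delta\mc{C}_{\Phi,\Psi}/\delta\omega\,;\,\delta\mc{C}_{\Phi,\Psi}/\delta\Theta)}(\omega_e\,;\,\Theta_e)$, which vanishes because the variational derivative of a Casimir is a null eigenvector of the Lie--Poisson operator --- exactly the identity established in \eqref{EBC-CasimirBrkt3} of Appendix~\ref{appendix-A}. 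Hence $\partial_t(\omega_e\,;\,\Theta_e) = 0$ and the critical point is a steady solution. Equivalently, one may substitute the two relations directly into the Hamiltonian matrix form \eqref{EBC-LPHam}: the entries of $(\delta\mc{H}_{EBC}/\delta\omega\,;\,\delta\mc{H}_{EBC}/\delta\Theta)$ become functions of $\Theta_e$ plus the combination $\omega_e\Psi'(\Theta_e)$, and on expanding the resulting Jacobians using the chain rule $J(G(\Theta),\,\cdot\,) = G'(\Theta)J(\Theta,\,\cdot\,)$, the Leibniz rule for $J(\cdot,\cdot)$, and the antisymmetry $J(a,b) = -J(b,a)$, every term collapses to a multiple of $J(\Theta_e,\Theta_e) = 0$ or cancels against a partner.

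I do not expect a serious obstacle: once $\mc{C}_{\Phi,\Psi}$ is known to be a Casimir (which is done in Appendix~\ref{appendix-A}), the argument is bookkeeping. The point that requires care is the functional-analytic setting of the first variation. Admissible variations must respect the boundary data --- $\psi\big|_{z=0,H}=0$ for the inversion $\psi = \Delta^{-1}\omega$, and $\delta\Theta\big|_{z=0,H}=0$ because $\Theta$ is prescribed on the plates --- so that $\delta\mc{H}_C = \int_{\mc D}\big((\delta\mc{H}_C/\delta\omega)\,\delta\omega + (\delta\mc{H}_C/\delta\Theta)\,\delta\Theta\big)\,\dd x\dd z$ with no residual boundary terms from the integrations by parts; and one should note that these functional derivatives are determined only modulo gradients of Casimirs, which leaves the conclusion unchanged.
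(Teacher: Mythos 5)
Your proposal is correct and is essentially the paper's own energy--Casimir argument: the paper writes the dynamics as $\frac{\dd}{\dt}\mc F=\{\mc F,\mc H_C\}$ (legitimate because $\mc C_{\Phi,\Psi}$ is a Casimir) and observes that this bracket vanishes for every $\mc F$ at a critical point of $\mc H_C$, whereas you keep $\mc H_{EBC}$ as the generator, use the critical-point relation to replace $\big(\delta\mc H_{EBC}/\delta\omega\,;\,\delta\mc H_{EBC}/\delta\Theta\big)$ by $-\big(\delta\mc C_{\Phi,\Psi}/\delta\omega\,;\,\delta\mc C_{\Phi,\Psi}/\delta\Theta\big)$ and then invoke the Casimir null-eigenvector property established in \eqref{EBC-CasimirBrkt3} --- the same two ingredients applied in the opposite order. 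Your explicit first-variation computation simply anticipates the paper's Proposition \ref{LPB-EBC2}, and your boundary-condition remarks are consistent with the paper's setup.
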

\begin{proof}
By \eqref{EBC-CasimirBrkt3} the Hamiltonian dynamics for an arbitrary functional $\mc{F}(\omega,\Theta)$ is given by
\begin{align*}
 \frac{\dd }{\dt} \mc{F}(\omega,\Theta) = \Big\{\mc F\,,\,{\mc H}_C\Big\}(\omega,\Theta)
\,.
\end{align*}
The Poisson bracket of the right-hand side of this dynamical equation vanishes at a critical point $(\omega_e,\Theta_e)$, where the variational derivitive is defined as
\begin{align*}
\delta {\mc H}_C\big|_{(\omega_e,\Theta_e)}:=\langle D{\mc H}_C\big|_{(\omega_e,\Theta_e)}\,,\,(\delta\omega,\delta\Theta)\rangle=0
\,.
\end{align*}
Consequently, critical points $\delta {\mc H}_C\big|_{(\omega_e,\Theta_e)}=0$ of the constrained Hamiltonian ${\mc H}_C$ are equilibrium solutions of the EBC system in \eqref{EBC-eqns-def1}. Namely, at such points one finds $ {\dd \mc{F}}/{\dt}|_{(\omega_e,\Theta_e)}=0$ for an arbitrary functional $\mc{F}$. 
\end{proof}

\begin{proposition}[Critical-point equilibrium conditions]\label{LPB-EBC2}
Critical points $(\omega_e,\Theta_e)$ of the sum $\mc{H}_C:=H_{EBC} + \mc{C}_{\Phi,\Psi}$  satisfy the following equilibrium flow conditions for solutions of the EBC system in \eqref{EBC-eqns-def1}.
\begin{align}
\psi_e(z)+\Psi(\Theta_e)= 0 
\quad\hbox{and}\quad
\Theta_e + \Phi'(\Theta_e) + \omega_e\Psi'(\Theta_e) = z
\,.
\label{CasimirEquilEqns}
\end{align}
\end{proposition}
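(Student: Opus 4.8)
The plan is to obtain the two relations in \eqref{CasimirEquilEqns} by simply evaluating the first variation of the constrained Hamiltonian $\mc{H}_C = H_{EBC} + \mc{C}_{\Phi,\Psi}$ and imposing the critical-point condition $\delta\mc{H}_C\big|_{(\omega_e,\Theta_e)}=0$ already used in Proposition~\ref{CritPtCond}. So this is essentially a one-step variational calculation.

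First I would fix the normalisation of $H_{EBC}$ to be the one for which the variational derivatives recorded in \eqref{EBC-LPHam} hold, namely $\delta H_{EBC}/\delta\omega=\Delta^{-1}\omega=\psi$ and $\delta H_{EBC}/\delta\Theta=\Theta-z$. This amounts to discarding from \eqref{EBC-eqns-Ham} the non-dynamical constant $-\tfrac12\int z^2\,\dd x\dd z$ and absorbing the Casimir term $-\tfrac12\int\Theta^2\,\dd x\dd z$ (noted there to be immaterial) into the arbitrary function $\Phi$ defining $\mc{C}_{\Phi,\Psi}$. The one computation needing a word of justification is $\delta\big(\tfrac12\int\omega\Delta^{-1}\omega\big)=\int\Delta^{-1}\omega\,\delta\omega$: this uses that $\Delta^{-1}$ is self-adjoint in the $L^2(\mc D)$ pairing under the stated boundary conditions ($\psi=0$ at $z=0,H$ and periodicity in $x$), so that the boundary terms arising from integrating by parts twice vanish, and then $\Delta^{-1}\omega=\psi$ by the definition $\omega=\Delta\psi$.

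Next I would vary the Casimir \eqref{eq:casimirsEBC}, which immediately gives $\delta\mc{C}_{\Phi,\Psi}/\delta\omega=\Psi(\Theta)$ and $\delta\mc{C}_{\Phi,\Psi}/\delta\Theta=\Phi'(\Theta)+\omega\Psi'(\Theta)$. Adding the two sets of variational derivatives,
\[
\frac{\delta\mc{H}_C}{\delta\omega}=\psi+\Psi(\Theta),\qquad
\frac{\delta\mc{H}_C}{\delta\Theta}=(\Theta-z)+\Phi'(\Theta)+\omega\Psi'(\Theta),
\]
and setting each of these expressions to zero at $(\omega_e,\Theta_e)$ — legitimate since the admissible variations $\delta\omega,\delta\Theta$ are otherwise arbitrary in the interior — yields precisely $\psi_e+\Psi(\Theta_e)=0$ and $\Theta_e+\Phi'(\Theta_e)+\omega_e\Psi'(\Theta_e)=z$, which are the conditions \eqref{CasimirEquilEqns}.

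For completeness I would note the consistency with \eqref{EBC-eqns-def3}: the first relation gives $\psi_e=-\Psi(\Theta_e)$, hence $J(\psi_e,\Theta_e)=-\Psi'(\Theta_e)J(\Theta_e,\Theta_e)=0$, so the buoyancy equation is trivially satisfied at equilibrium, while the vorticity equation then reduces to a constraint among $\omega_e$, $\psi_e$ and $z$ that is automatically met by the $z$-only profiles guaranteed by Proposition~\ref{CritPtCond}. I do not anticipate a genuine obstacle here; the only points to watch are the bookkeeping of the $H_{EBC}$ normalisation (so that $\Theta-z$, rather than $-z$, appears in $\delta H_{EBC}/\delta\Theta$) and the self-adjointness of $\Delta^{-1}$ under the imposed boundary conditions.
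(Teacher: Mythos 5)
Your proposal is correct and follows essentially the same route as the paper: compute the first variation $\delta\mc{H}_C\big|_{(\omega_e,\Theta_e)}=\int_{\mc D}\big(\psi_e+\Psi(\Theta_e)\big)\delta\omega+\big((\Theta_e-z)+\Phi'(\Theta_e)+\omega_e\Psi'(\Theta_e)\big)\delta\Theta\,\dd x\dd z$ and set the coefficients of the independent variations to zero. Your extra remarks on the normalisation of $H_{EBC}$ and the self-adjointness of $\Delta^{-1}$ are sound bookkeeping that the paper leaves implicit.
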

\begin{proof}
By direct calculation
\begin{align*}
0 = 
\delta {\mc H}_C\big|_{(\omega_e,\Theta_e)} = \int_{\mc{D}} \big(\psi_e+\Psi(\Theta_e)\big)\delta \omega 
+  \big((\Theta_e - z) + \Phi'(\Theta_e) + \omega_e\Psi'(\Theta_e))\big)\delta \Theta  dxdz
\,.
\end{align*}
Vanishing of the coefficients of the independent variations $\delta \omega$ and $\delta \Theta$ then establishes the  critical point conditions of Proposition \ref{LPB-EBC2}. Then Proposition \ref{CritPtCond} implies that solutions satisfying these conditions are equilibria of the EBC system in \eqref{EBC-eqns-def1}.
\end{proof}

\begin{remark}[Flow properties of critical-point equilibria]\label{DeltaPhysInterp}
The equilibrium relations in \eqref{CasimirEquilEqns} imply some physical interpretations of the fluid equilibria. For example,  the  streamfunction-velocity relation in \eqref{EBC-eqns-def1} implies that 
\begin{align}
U_e(z) := -\,\frac{d\psi_e}{dz} = \frac{d\Psi({\Theta}_e(z))}{dz} = \Psi'({\Theta}_e)\frac{d{\Theta}_e}{dz}
\,,\quad
\omega_e(z) = - \,\frac{dU_e}{dz} = \frac{d^2\psi_e}{dz^2}
\quad\hbox{and}\quad 
\omega_e'(z) = - \,U_e''(z)
\,.
% \label{PhysEquilEqns1}
\end{align}
Thus, these critical-point equilibria are horizontal flows with vertical shear that can depend nonlinearly on the vertical coordinate, $z$, through the arbitrary functions of equilibrium buoyancy $\Phi({\Theta}_e)$ and $\Psi({\Theta}_e)$. Moreover, the nonlinear vertical variations of the vorticity and buoyancy are related by the second equilibrium relation in \eqref{CasimirEquilEqns}.  In particular, the $z$-derivative of the second equilibrium relation in \eqref{CasimirEquilEqns} implies that
\begin{align}
\frac{d}{dz}\Big(\Theta_e + \Phi'(\Theta_e) + \omega_e\Psi'(\Theta_e)\Big)
= \frac{d{\Theta}_e}{dz}\Big(1+ \Phi''(\Theta_e) + \omega_e\Psi''(\Theta_e)\Big) 
+ \frac{d\omega_e}{dz} \Psi'(\Theta_e)
= 1
\,.
\label{PhysEquilEqns1}
\end{align}
Therefore, one finds two relations between the arbitrary functions of equilibrium buoyancy $\Phi({\Theta}_e)$ and $\Psi({\Theta}_e)$ that relate them to the physical fluid properties of the equilibria, \eqref{PhysEquilEqns1}
\begin{align}
1+ \Phi''(\Theta_e) + \omega_e\Psi''(\Theta_e)
= \frac{ d\Theta_e/dz - U_e(z) d\omega_e/dz } { (d\Theta_e/dz)^2 }
= A(z)
 \quad\hbox{and}\quad 
 \Psi'({\Theta}_e) = \frac{U_e(z)}{d\Theta_e/dz} = B(z)
\,.
\label{PhysEquilEqns2}
\end{align}
Since $\omega_e'(z) = - \,U_e''(z)$, these two relations imply the formula
\begin{align}
A(z) + \frac{\omega_e'(z)}{{\Theta}_e'(z)}B(z) = 1/{\Theta}_e'(z)
\,.\label{PhysEquilEqns3}
\end{align}
\end{remark}

\noindent
\begin{proposition}[Negative-definite stability conditions]\label{LPB-EBC3}
The second variation \[\delta^2 {\mc H}_C\big|_{(\omega_e,\Theta_e)}:=\big\langle D^2{\mc H}_C\big|_{(\omega_e,\Theta_e)}\,,\,\big((\delta\omega,\delta\Theta),(\delta\omega,\delta\Theta)\big)\big\rangle\] will be negative definite, provided the following conditions are satisfied,
\begin{align}
A := \big(1+ \Phi''(\Theta_e) + \omega_e\Psi''(\Theta_e)\big) < 0
\quad\hbox{and}\quad
A/|\bm{k}|_{min}^2 + \big(\Psi'(\Theta_e)\big)^2 > 0
\,,
\label{NegDefCon}
\end{align}
where $|\bm{k}|_{min}^2$ is the eigenvalue of the Laplacian with the smallest magnitude in the flow domain $\mc{D}$.
\end{proposition}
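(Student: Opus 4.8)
The plan is to use the energy--Casimir method: form the Hessian of the conserved functional $\mathcal{H}_C=\mathcal{H}_{EBC}+\mathcal{C}_{\Phi,\Psi}$ at the critical point $(\omega_e,\Theta_e)$ supplied by Proposition~\ref{LPB-EBC2}, and extract sufficient conditions for that Hessian to be sign-definite. Starting from the first variation computed in the proof of Proposition~\ref{LPB-EBC2}, I would take one further variation, evaluate at $(\omega_e,\Theta_e)$, and use $\delta\psi=\Delta^{-1}\delta\omega$. Since $\mathcal{H}_{EBC}$ is quadratic and $\mathcal{C}_{\Phi,\Psi}$ is affine in $\omega$, only quadratic terms in $(\delta\omega,\delta\Theta)$ survive, giving
\[
\delta^2\mathcal{H}_C\big|_{(\omega_e,\Theta_e)}
=\int_{\mathcal D}\Big(\delta\omega\,\Delta^{-1}\delta\omega
+2\,\Psi'(\Theta_e)\,\delta\omega\,\delta\Theta
+A\,(\delta\Theta)^2\Big)\,dx\,dz,\qquad
A:=1+\Phi''(\Theta_e)+\omega_e\Psi''(\Theta_e),
\]
so that $A$ is precisely the coefficient in \eqref{NegDefCon}, and by \eqref{PhysEquilEqns2} it is in fact a function of $z$ alone.

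The next step is to diagonalise this form. The hypothesis $A<0$ makes $A$ an invertible negative multiplication operator, so I would complete the square in the pointwise variable $\delta\Theta$,
\[
\delta^2\mathcal{H}_C\big|_{(\omega_e,\Theta_e)}
=\int_{\mathcal D}A\Big(\delta\Theta+A^{-1}\Psi'(\Theta_e)\,\delta\omega\Big)^2\,dx\,dz
+\Big\langle\delta\omega,\ \big(\Delta^{-1}-A^{-1}(\Psi'(\Theta_e))^2\big)\,\delta\omega\Big\rangle_{L^2(\mathcal D)},
\]
where the first integral is governed by the sign of $A$ --- this is the role of the first condition in \eqref{NegDefCon}. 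It then remains to sign-control the residual form in $\delta\omega$ alone, and here the spectrum of $-\Delta$ on $\mathcal D$ enters: expanding $\delta\omega$ in the $L^2(\mathcal D)$-orthonormal eigenbasis of $-\Delta$ subject to the prescribed boundary conditions (periodic in $x$, Dirichlet at $z=0,H$), every eigenvalue is at least $|\bm{k}|^2_{min}$, so $-\Delta^{-1}$ is bounded with operator norm $|\bm{k}|^{-2}_{min}$. The second inequality in \eqref{NegDefCon} is exactly the bound one gets by testing $\Delta^{-1}-A^{-1}(\Psi'(\Theta_e))^2$ against the extremal (lowest) mode and extending to all of $L^2(\mathcal D)$; assembling the two pieces yields the claimed definiteness of $\delta^2\mathcal{H}_C|_{(\omega_e,\Theta_e)}$ on every nonzero $(\delta\omega,\delta\Theta)$.

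The crux will be the residual $\delta\omega$-form: $\Delta^{-1}$ is nonlocal whereas $A(z)$ and $\Psi'(\Theta_e(z))$ are $z$-dependent multiplication operators, so the two are not simultaneously diagonalisable and one cannot simply argue pointwise; the estimate has to pass through a Poincar\'e-type spectral bound for $-\Delta^{-1}$ tuned to the rectangular slice $\mathcal D$ with its mixed periodic/Dirichlet conditions, and one must check that $|\bm{k}|^2_{min}$ is genuinely the smallest Laplacian eigenvalue there and that the lowest mode realises the worst case. A subsidiary care point is the sign bookkeeping in the first step --- in particular the convention for $\Delta^{-1}$ inside $\mathcal{H}_{EBC}$ --- needed to land the two sufficient conditions in exactly the form \eqref{NegDefCon}; and it is worth noting that running the same computation with the opposite strict inequalities produces the companion positive-definite (Lyapunov stable) regime, which is the setting in which the Taylor--Goldstein equation derived in the remainder of Appendix~\ref{appendix-B} operates.
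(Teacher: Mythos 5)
Your Hessian computation agrees with the paper's equation \eqref{2nd-varH}: the second variation is the quadratic form with kernel $\begin{pmatrix}\Delta^{-1} & \Psi'(\Theta_e)\\ \Psi'(\Theta_e) & A\end{pmatrix}$, and your identification of $A$ is the same. Where you diverge from the paper is in the reduction: the paper uses the Poincar\'e estimate \eqref{Poinc-estimate} to replace the nonlocal entry $\Delta^{-1}$ by the multiplication operator $-1/|\bm{k}|_{min}^2$ and then applies the two-by-two definiteness criterion to the resulting pointwise matrix (the matrix displayed in Lemma \ref{LPB-EBC3-flow}), whereas you pass to the Schur complement by completing the square in $\delta\Theta$. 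The difference of route is not in itself objectionable; the genuine gap is in your final assembly step, which is asserted rather than proved and in fact cannot hold as described.

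Concretely: with $A<0$ the term $-A^{-1}(\Psi'(\Theta_e))^2$ is nonnegative, and the second hypothesis in \eqref{NegDefCon}, $A/|\bm{k}|_{min}^2+(\Psi'(\Theta_e))^2>0$, is equivalent to $-A^{-1}(\Psi'(\Theta_e))^2>1/|\bm{k}|_{min}^2$. Since the spectral bound gives $-\|\delta\omega\|_{L^2}^2/|\bm{k}|_{min}^2\le\langle\delta\omega,\Delta^{-1}\delta\omega\rangle<0$, this hypothesis makes your residual operator $\Delta^{-1}-A^{-1}(\Psi'(\Theta_e))^2$ \emph{positive} definite, not negative; your decomposition then reads (nonpositive) plus (positive), and ``assembling the two pieces'' yields indefiniteness rather than the claimed sign. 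The obstruction is structural: the eigenvalues of $-\Delta$ are unbounded above, so $\langle\delta\omega,\Delta^{-1}\delta\omega\rangle$ can be made arbitrarily small relative to $\|\delta\omega\|_{L^2}^2$ on high-wavenumber modes, and hence no condition phrased at the lowest eigenvalue $|\bm{k}|_{min}^2$ can force the Schur complement to be negative wherever $\Psi'(\Theta_e)\neq0$ --- so ``testing against the extremal mode and extending to all of $L^2$'' is precisely the step that fails, and your own route, carried out honestly, would produce the reversed lowest-mode inequality $A/|\bm{k}|_{min}^2+(\Psi'(\Theta_e))^2<0$ rather than \eqref{NegDefCon}. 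To match the conditions in the form the proposition states them you must follow the paper's reduction --- substitute the Poincar\'e bound into the $(1,1)$ entry and test the resulting $2\times2$ multiplication-operator matrix pointwise in $z$ --- while being explicit that the Poincar\'e inequality bounds $\langle\delta\omega,\Delta^{-1}\delta\omega\rangle$ from \emph{below}, so it controls the replacement form $\delta^2\wt{\mc H}_C$ of Lemma \ref{LPB-EBC3-flow} and one must track carefully how definiteness is transferred back to $\delta^2{\mc H}_C$ itself; your proposal does not engage with either of these sign-direction issues, and they are exactly where the argument lives.
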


\begin{lemma}[Flow characteristics of the negative-definite conditions in \eqref{NegDefCon}]\label{LPB-EBC3-flow}
From the flow equilibrium conditions in \eqref{PhysEquilEqns2}, the conditions \eqref{NegDefCon} for negative definiteness of the second variation may be expressed in terms of flow characteristics of the equilibrium solution as
\begin{align}
  {|\bm{k}|_{min}^2}U_e(z)^2  \  >\    U_e(z) \frac{d\omega_e}{dz} -  \frac{d\Theta_e}{dz}  \  >\  0
\,.
\label{NegDefConPhys}
\end{align}
\end{lemma}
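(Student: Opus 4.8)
The plan is to substitute the equilibrium expressions from \eqref{PhysEquilEqns2} for the two combinations of arbitrary buoyancy functions appearing in the negative-definiteness conditions \eqref{NegDefCon}, namely
\[
A(z) = \frac{d\Theta_e/dz - U_e(z)\, d\omega_e/dz}{(d\Theta_e/dz)^2}
\quad\hbox{and}\quad
\Psi'(\Theta_e) = \frac{U_e(z)}{d\Theta_e/dz}\,,
\]
and then to clear denominators using the fact that $(d\Theta_e/dz)^2 > 0$ and $|\bm{k}|_{min}^2 > 0$. Both conditions in \eqref{NegDefCon} are strict inequalities, so multiplying through by strictly positive quantities preserves their direction, and the equivalence is clean.

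First I would treat the sign condition $A<0$. Since $d\Theta_e/dz \neq 0$ — which is the standing assumption needed to invert the relations \eqref{PhysEquilEqns2} in the first place — the denominator $(d\Theta_e/dz)^2$ is strictly positive, so $A<0$ is equivalent to $d\Theta_e/dz - U_e(z)\, d\omega_e/dz < 0$, i.e. to $U_e(z)\, d\omega_e/dz - d\Theta_e/dz > 0$. This is exactly the right-hand inequality in \eqref{NegDefConPhys}.

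Next I would handle the second condition $A/|\bm{k}|_{min}^2 + (\Psi'(\Theta_e))^2 > 0$. Substituting the expressions above gives
\[
\frac{1}{|\bm{k}|_{min}^2}\,\frac{d\Theta_e/dz - U_e(z)\, d\omega_e/dz}{(d\Theta_e/dz)^2} + \frac{U_e(z)^2}{(d\Theta_e/dz)^2} > 0\,,
\]
and multiplying through by the strictly positive quantity $|\bm{k}|_{min}^2 (d\Theta_e/dz)^2$ reduces this to
\[
\big(d\Theta_e/dz - U_e(z)\, d\omega_e/dz\big) + |\bm{k}|_{min}^2\, U_e(z)^2 > 0\,,
\]
which rearranges to $|\bm{k}|_{min}^2\, U_e(z)^2 > U_e(z)\, d\omega_e/dz - d\Theta_e/dz$, the left-hand inequality in \eqref{NegDefConPhys}. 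Chaining the two displayed equivalences then yields the stated double inequality.

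The argument is entirely elementary algebra, so there is no genuine obstacle; the only point worth stating explicitly is the standing requirement $d\Theta_e/dz \neq 0$, which is already implicit in \eqref{PhysEquilEqns2} (it is what makes $\Psi'(\Theta_e)$ and $A(z)$ well defined as functions of $z$) and which guarantees that the denominators cleared in the two steps above are strictly positive.
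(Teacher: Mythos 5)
Your proof is correct and takes essentially the same route as the paper: the paper's (combined) proof of Proposition \ref{LPB-EBC3} and Lemma \ref{LPB-EBC3-flow} simply rewrites the second-variation matrix with entries $-1/|\bm{k}|_{min}^2$, $B(z)=U_e/(d\Theta_e/dz)$ and $A(z)$ via \eqref{PhysEquilEqns2} and asserts the translation of \eqref{NegDefCon} into \eqref{NegDefConPhys}, while you make the same substitution explicit and clear the strictly positive factors $(d\Theta_e/dz)^2$ and $|\bm{k}|_{min}^2$. Your observation that $d\Theta_e/dz\neq 0$ is the only hypothesis needed matches what is implicit in \eqref{PhysEquilEqns2}.
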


%\begin{align}
%A(z) =  \frac{ d\Theta_e/dz - U_e(z) d\omega_e/dz } { (d\Theta_e/dz)^2 } < 0
%\quad\hbox{and}\quad
%\frac{A(z)}{|\bm{k}|_{min}^2} + \Big(\frac{U_e(z)}{d\Theta_e/dz}\Big)^2 > 0
%\,.
%\label{NegDefConPhys}
%\end{align}

\begin{proof}
By direct calculation
\begin{align}
\delta^2 {\mc H}_C\big|_{(\omega_e,\Theta_e)} 
= \int_{\mc{D}} (\delta\omega,\delta\Theta)
\begin{pmatrix} \Delta^{-1} & \Psi'(\Theta_e)
\\
\Psi'(\Theta_e) & 1+ \Phi''(\Theta_e) + \omega_e\Psi''(\Theta_e
\end{pmatrix}
\begin{pmatrix} \delta\omega \\ \delta\Theta \end{pmatrix}dxdz
\,.
\label{2nd-varH}
\end{align}
Next, one uses the Poincar\'e inequality to estimate
\begin{align}
\int_{\mc{D}} \big(\delta\omega(\Delta^{-1}\delta\omega)\big) dxdz 
\ge - \int_{\mc{D}} \frac{(\delta\omega)^2}{{|\bm{k}|_{min}}^2} dxdz
\,,
\label{Poinc-estimate}
\end{align}
where $|\bm{k}|_{min}^2$ is the minimum magnitude of Fourier wavenumber of the Laplacian operator in the flow domain $\mc{D}$. 
By applying the Cayley's theorem along with the Poincar\'e inequality estimate \eqref{Poinc-estimate} one finds that the following quadratic form is negative definite, provided the inequality conditions hold in \eqref{NegDefCon}, now written in terms of the flow equilibrium conditions in \eqref{PhysEquilEqns2}, 
\begin{align*}
\delta^2 {\mc {\wt{H}}}_C\big|_{(\omega_e,\Theta_e)} 
:= \int_{\mc{D}} (\delta\omega,\delta\Theta)
\begin{pmatrix} - 1/ |\bm{k}|_{min}^2 & \frac{U_e(z)}{d\Theta_e/dz}
\\
\frac{U_e(z)}{d\Theta_e/dz} & \frac{ d\Theta_e/dz - U_e(z) d\omega_e/dz } { (d\Theta_e/dz)^2 }
\end{pmatrix}
\begin{pmatrix} \delta\omega \\ \delta\Theta \end{pmatrix}dxdz
\,,
\end{align*}
and expressed in terms of physical variables in equation \eqref{NegDefConPhys} of the lemma .
\end{proof}

\subsection{Linearised deterministic EBC equations}
\begin{proposition}[Linearised EBC equations]\label{LPB-EBC6}
Linearising equations \eqref{EBC-eqns-def1} around the equilibrium $(\omega_e ; \Theta_e)$ satisfying the nonlinear relations in \eqref{CasimirEquilEqns} yields the equation
\begin{align}
\begin{split}
\frac{\partial \big(\delta\omega ; \delta\Theta\big)}{\partial t}
&= \frac{-1\,}{2} \,{\rm ad}^*{\bigg( \frac{\delta \big(\delta^2 {\mc H}_C\big|_{(\omega_e,\Theta_e)}\big)}{\delta (\delta\omega)}
;
\frac{\delta \big(\delta^2 {\mc H}_C\big|_{(\omega_e,\Theta_e)}\big)}{\delta (\delta\Theta)\big)}}\bigg)
\big(\omega_e,\Theta_e\big) 
\\
\frac{\partial }{\partial t} 
\begin{bmatrix}
\delta\omega \\ \delta\Theta
\end{bmatrix}
&=
\frac{-1\,}{2}
\begin{bmatrix}
J(\omega_e,\,\cdot\,) & J(\Theta_e,\,\cdot\,) 
\\ 
J(\Theta_e,\,\cdot\,)  & 0
\end{bmatrix}
\begin{bmatrix}
{\delta \big(\delta^2 {\mc H}_C\big|_{(\omega_e,\Theta_e)}\big)}/{\delta (\delta\omega)}
\\ 
{\delta \big(\delta^2 {\mc H}_C\big|_{(\omega_e,\Theta_e)}\big)}/{\delta (\delta\Theta)\big)}
\end{bmatrix}
\\&=
-
\begin{bmatrix}
J(\omega_e,\,\cdot\,) & J(\Theta_e,\,\cdot\,) 
\\ 
J(\Theta_e,\,\cdot\,)  & 0
\end{bmatrix}
\begin{bmatrix}
\Delta^{-1}\delta\omega + \Psi'(\Theta_e)\delta\Theta
\\ 
\Psi'(\Theta_e)\delta\omega + \big(1+ \Phi''(\Theta_e) + \omega_e\Psi''(\Theta_e)\big)\delta\Theta
\end{bmatrix}
\,,
\end{split}
\label{SDP-EBC-Lin}
\end{align}
where $\delta^2 {\mc H}_C\big|_{(\omega_e,\Theta_e)}$ is given in equation \eqref{2nd-varH}.

\end{proposition}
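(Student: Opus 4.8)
The plan is to carry out the standard energy--Casimir linearisation of the Lie--Poisson equation \eqref{SDP-EBC-LP-ad-star} about the critical-point equilibrium $(\omega_e;\Theta_e)$. First I would note that, because $\mc{C}_{\Phi,\Psi}$ is a Casimir, $\{\mc{F},\mc{C}_{\Phi,\Psi}\}\equiv 0$ for every functional $\mc{F}$ by \eqref{EBC-CasimirBrkt3}, which is the same as saying ${\rm ad}^*_{(\delta\mc{C}_{\Phi,\Psi}/\delta\omega;\,\delta\mc{C}_{\Phi,\Psi}/\delta\Theta)}(\omega;\Theta)\equiv 0$. Hence the deterministic EBC flow \eqref{SDP-EBC-LP-ad-star} may be rewritten with the constrained Hamiltonian $\mc{H}_C=\mc{H}_{EBC}+\mc{C}_{\Phi,\Psi}$ in place of $\mc{H}_{EBC}$ without changing the dynamics, so that $\partial_t(\omega;\Theta)=-\,{\rm ad}^*_{(\delta\mc{H}_C/\delta\omega;\,\delta\mc{H}_C/\delta\Theta)}(\omega;\Theta)$.

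Next I would set $(\omega;\Theta)=(\omega_e;\Theta_e)+(\delta\omega;\delta\Theta)$ and Taylor-expand the right-hand side to first order in $(\delta\omega;\delta\Theta)$, using bilinearity of ${\rm ad}^*$ in its two arguments. This produces three linear terms: the zeroth-order term $-{\rm ad}^*_{D\mc{H}_C(\omega_e,\Theta_e)}(\omega_e;\Theta_e)$, which vanishes since $(\omega_e;\Theta_e)$ is an equilibrium; a transport-of-perturbation term $-{\rm ad}^*_{D\mc{H}_C(\omega_e,\Theta_e)}(\delta\omega;\delta\Theta)$; and a coupling term $-{\rm ad}^*_{D^2\mc{H}_C(\omega_e,\Theta_e)\cdot(\delta\omega;\delta\Theta)}(\omega_e;\Theta_e)$. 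The crucial point is that the transport term \emph{also} vanishes, because $(\omega_e;\Theta_e)$ is a \emph{critical point} of $\mc{H}_C$, so that its first variation is null, $D\mc{H}_C(\omega_e,\Theta_e)=0$; this is exactly the statement established in the proof of Proposition \ref{CritPtCond} and made explicit by the equilibrium relations \eqref{CasimirEquilEqns}. This is also why passing to $\mc{H}_C$ in the first step is essential rather than merely convenient: for $\mc{H}_{EBC}$ alone the first variation does not vanish at equilibrium and the transport term would survive. Only the coupling term remains, giving $\partial_t(\delta\omega;\delta\Theta)=-{\rm ad}^*_{D^2\mc{H}_C(\omega_e,\Theta_e)\cdot(\delta\omega;\delta\Theta)}(\omega_e;\Theta_e)$.

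Finally I would identify $D^2\mc{H}_C(\omega_e,\Theta_e)\cdot(\delta\omega;\delta\Theta)$ with the variational derivative of the quadratic functional $\delta^2\mc{H}_C|_{(\omega_e,\Theta_e)}$ given in \eqref{2nd-varH}: since that second variation is written there as the undivided quadratic form $\langle D^2\mc{H}_C, ((\delta\omega,\delta\Theta),(\delta\omega,\delta\Theta))\rangle$, its variational derivative in $(\delta\omega;\delta\Theta)$ equals $2\,D^2\mc{H}_C(\omega_e,\Theta_e)\cdot(\delta\omega;\delta\Theta)$, which accounts for the prefactor $\tfrac{-1}{2}$ in the first two lines of \eqref{SDP-EBC-Lin}. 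Writing $-{\rm ad}^*_{(\cdot)}(\omega_e;\Theta_e)$ as the Lie--Poisson matrix operator of \eqref{EBC-LPHam} with $(\omega,\Theta)$ frozen at $(\omega_e,\Theta_e)$, and reading the Hessian entries $\Delta^{-1}$, $\Psi'(\Theta_e)$ and $1+\Phi''(\Theta_e)+\omega_e\Psi''(\Theta_e)$ directly off \eqref{2nd-varH}, yields the last line. The main obstacle is almost entirely bookkeeping: keeping the factor $\tfrac12$ and the ${\rm ad}^*$-versus-matrix sign conventions of \eqref{EBC-LPHam}--\eqref{SDP-EBC-LP-ad-star} straight throughout; the only genuine idea is the vanishing of the first variation $D\mc{H}_C(\omega_e,\Theta_e)$ at the critical point, after which the stated form is obtained by direct substitution.
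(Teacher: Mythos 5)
Your proposal is correct and follows essentially the same route as the paper, which simply invokes a functional Taylor expansion of the ${\rm ad}^*$ operator in \eqref{SDP-EBC-LP-ad-star} about the equilibrium and defers the details to Appendix A of \cite{HMRW1985}. You have merely written out those details explicitly -- the replacement of $\mc{H}_{EBC}$ by $\mc{H}_C$ via the Casimir, the vanishing of the transport term through $D\mc{H}_C(\omega_e,\Theta_e)=0$ from \eqref{CasimirEquilEqns}, and the factor-of-$2$ bookkeeping behind the prefactor $\tfrac{-1}{2}$ -- all of which check out against \eqref{2nd-varH} and \eqref{SDP-adstar-action-fns}.
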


\begin{proof}
The proof proceeds by a functional Taylor expansion of the entries of the linear operator ${\rm ad}^*$ in equation \eqref{SDP-EBC-LP-ad-star} in the neighbourhood of the equilibrium solution. For details, see Appendix A of \cite{HMRW1985}.
\end{proof}

\subsection*{Linear stability analysis}
Upon defining $\nu:=\delta\omega$ with $\overline{\phi}=\delta \psi = \Delta^{-1}\nu$, as well as $\theta:=\delta\Theta$ and $h(\nu,\theta) = \delta^2 {\mc H}_C\big|_{(\omega_e,\Theta_e)}$, the linearised equation in \eqref{SDP-EBC-Lin} that holds in the neighbourhood of the equilibrium satisfying the critical point condition $0 = 
\delta {\mc H}_C\big|_{(\omega_e,\Theta_e)}$ can be rewritten as
\begin{align}
\frac{\partial }{\partial t} 
\begin{bmatrix}
\nu \\ \theta
\end{bmatrix}
&=
-
\begin{bmatrix}
J(\omega_e(z),\,\cdot\,) & J(\Theta_e(z),\,\cdot\,) 
\\ 
J(\Theta_e(z),\,\cdot\,)  & 0
\end{bmatrix}
\begin{bmatrix}
\delta h/ \delta \nu =
\overline{\phi}+ \Psi'(\Theta_e)\theta
\\ 
\delta h/ \delta \theta =
\Psi'(\Theta_e)\nu + \big(1+ \Phi''(\Theta_e) + \omega_e(z)\Psi''(\Theta_e)\big)\theta
\end{bmatrix}
:=
\big\{(\nu ; \theta)\,,\, h \big\} 
\,,
\label{SDP-EBC-LinEqn}
\end{align}
in which the physical meanings of the terms in the variational derivatives involving the functions $\Phi(\Theta_e)$ and $\Psi(\Theta_e)$ are related to the fluid properties of the equilibrium solutions in equations \eqref{PhysEquilEqns1}. \medskip

\begin{proposition}[Preservation of second-variation quadratic form by the linearised equation, \cite{HMRW1985}]\label{LPB-EBC7}$\,$\\
The linearised equation in \eqref{SDP-EBC-LinEqn} preserves the quadratic form $h(\nu,\theta) = \delta^2 {\mc H}_C\big|_{(\omega_e,\Theta_e)}$ arising from the second variation in an infinitesimal neighbourhood of the equilibrium solution $(\omega_e,\Theta_e)$ for which ${\mc H}_C$ has a critical point. 
\end{proposition}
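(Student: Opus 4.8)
The plan is to recognise the linearised system \eqref{SDP-EBC-LinEqn} as a \emph{linear} Hamiltonian system whose Poisson operator is the semidirect-product Lie--Poisson matrix operator of \eqref{EBC-LPHam} \emph{frozen} at the equilibrium $(\omega_e,\Theta_e)$ and whose Hamiltonian is the fixed quadratic form $h(\nu,\theta)=\delta^2{\mc H}_C\big|_{(\omega_e,\Theta_e)}$ given in \eqref{2nd-varH}. Conservation of $h$ then follows from the skew-symmetry of that frozen operator in the $L^2$ pairing --- the same skew-symmetry that gave energy conservation for the full nonlinear EBC system just after \eqref{EBC-LPHam}. Abbreviate $\mathcal{O}_e:=\begin{bmatrix}J(\omega_e,\,\cdot\,)&J(\Theta_e,\,\cdot\,)\\ J(\Theta_e,\,\cdot\,)&0\end{bmatrix}$ and let $Dh:=\big(\delta h/\delta\nu\,;\,\delta h/\delta\theta\big)=\big(\overline{\phi}+\Psi'(\Theta_e)\theta\,;\,\Psi'(\Theta_e)\nu+(1+\Phi''(\Theta_e)+\omega_e\Psi''(\Theta_e))\theta\big)$ be the linear map read off from \eqref{2nd-varH}, so that \eqref{SDP-EBC-LinEqn} is precisely $\partial_t(\nu;\theta)=-\,\mathcal{O}_e\,Dh=\{(\nu;\theta),h\}$.

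First I would differentiate $h$ along a solution $(\nu(t),\theta(t))$. Since $h$ is quadratic in $(\nu,\theta)$, its variational derivative $Dh$ is linear and the chain rule gives
\begin{align*}
\frac{d}{dt}\,h\big(\nu(t),\theta(t)\big)
&= \Big\langle Dh\,,\,\partial_t(\nu;\theta)\Big\rangle
= -\,\Big\langle Dh\,,\,\mathcal{O}_e\,Dh\Big\rangle\,,
\end{align*}
where $\langle\,\cdot\,,\,\cdot\,\rangle$ is the $L^2$ pairing on $\mc D$. (Reading $Dh$ off \eqref{2nd-varH} correctly uses that $\Delta^{-1}$ is self-adjoint under the boundary conditions imposed on $\overline{\phi}=\Delta^{-1}\nu$, so that the $\tfrac12\int\nu\,\Delta^{-1}\nu$ part of \eqref{2nd-varH} contributes exactly $\overline{\phi}$ to $\delta h/\delta\nu$.)

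Next I would verify that $\mathcal{O}_e$ is skew-symmetric for the $L^2$ pairing, which at once gives $\big\langle Dh,\mathcal{O}_e Dh\big\rangle=0$ and hence $\tfrac{d}{dt}h=0$; this is where the (routine) work sits. Because the equilibrium profiles depend on $z$ alone, one has $J(\omega_e,g)=-\,\omega_e'(z)\,g_x$ and $J(\Theta_e,g)=-\,\Theta_e'(z)\,g_x$, so that each entry of $\mathcal{O}_e$ is a first-order operator in $\partial_x$; integrating by parts in $x$ --- which costs no boundary term by horizontal periodicity --- yields $\langle a,J(\omega_e,b)\rangle=-\langle b,J(\omega_e,a)\rangle$ and $\langle a,J(\Theta_e,b)\rangle=-\langle b,J(\Theta_e,a)\rangle$, and these assemble into $\langle\mathbf a,\mathcal{O}_e\mathbf b\rangle=-\langle\mathbf b,\mathcal{O}_e\mathbf a\rangle$. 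Equivalently, one may simply invoke the skew-symmetry of the Hamiltonian matrix operator already recorded for the full system after \eqref{EBC-LPHam}, which holds verbatim with $(\omega,\Theta)$ replaced by the fixed $(\omega_e,\Theta_e)$, provided the perturbations $\nu=\delta\omega$, $\theta=\delta\Theta$ and $\overline{\phi}=\delta\psi$ satisfy the homogeneous versions of the boundary conditions listed after \eqref{EBC-eqns-def3}. The only real obstacle is therefore bookkeeping: confirming that every boundary contribution from the integrations by parts vanishes under those boundary conditions, and that $Dh$ has genuinely been identified from \eqref{2nd-varH}. With skew-symmetry established, $\tfrac{d}{dt}h=-\langle Dh,\mathcal{O}_e Dh\rangle=0$, so $h(\nu,\theta)=\delta^2{\mc H}_C\big|_{(\omega_e,\Theta_e)}$ is a constant of the motion for \eqref{SDP-EBC-LinEqn}, as claimed.

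Finally I would remark that this is the standard fact (see Appendix A of \cite{HMRW1985}) that the linearisation of a Lie--Poisson system at a critical point of an energy--Casimir sum is itself a Hamiltonian system, with constant Poisson structure equal to the Lie--Poisson bracket frozen at the equilibrium and Hamiltonian equal to the second variation; such a system conserves that quadratic Hamiltonian automatically, which is exactly Proposition~\ref{LPB-EBC7} and is the structural basis for the energy--Casimir stability conclusions of Proposition~\ref{LPB-EBC3} and Lemma~\ref{LPB-EBC3-flow}.
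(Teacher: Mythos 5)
Your argument is correct and is essentially the paper's own proof: the paper simply notes that the matrix operator in \eqref{SDP-EBC-LinEqn} (frozen at the equilibrium) is antisymmetric under the $L^2$ pairing, whence $\frac{dh}{dt}=\{h,h\}=0$. Your proposal just fills in the same reasoning in more detail — identifying $Dh$ from \eqref{2nd-varH} and checking the skew-symmetry by integration by parts under the stated boundary conditions — so it matches the paper's route.
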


\begin{proof}
The matrix operator in equation \eqref{SDP-EBC-LinEqn} is antisymmetric under pairing in $L^2$. Consequently, 
\[
\frac{dh}{dt} = \big\{h(\nu ,\theta)\,,\, h \big\} = 0\,.
\]
\end{proof}

\begin{proposition}[Linear Lyapunov stability theorem]\label{LPB-EBC4}
Equilibrium solutions $(\omega_e,\Theta_e)$ that satisfy the conditions for negative definiteness in Proposition \ref{LPB-EBC3} are linearly Lyapunov stable with respect to the metric given by \emph{minus} the second variation: $-\,\delta^2 {\mc H}_C\big|_{(\omega_e,\Theta_e)}>0$.
\end{proposition}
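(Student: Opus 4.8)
The plan is to combine the conserved quadratic form supplied by Proposition~\ref{LPB-EBC7} with the definiteness supplied by Proposition~\ref{LPB-EBC3}, using the standard energy-Casimir (Arnold) argument. First I would recall that by Proposition~\ref{CritPtCond} the pair $(\omega_e,\Theta_e)$ is an equilibrium, and by Proposition~\ref{LPB-EBC3} the second variation $\delta^2{\mc H}_C\big|_{(\omega_e,\Theta_e)}$ is a negative-definite quadratic form on the space of admissible perturbations $(\delta\omega,\delta\Theta)$; hence $\|(\delta\omega,\delta\Theta)\|^2 := -\,\delta^2{\mc H}_C\big|_{(\omega_e,\Theta_e)}(\delta\omega,\delta\Theta)$ is a genuine norm (more precisely, I should note it is equivalent to the natural $H^{-1}\times L^2$-type norm on perturbations, since the matrix in \eqref{2nd-varH} has bounded and boundedly-invertible entries once the strict inequalities in \eqref{NegDefCon} hold). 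Writing the linearised dynamics as in \eqref{SDP-EBC-LinEqn}, the evolution of any admissible perturbation $(\nu,\theta)$ stays in the admissible space, and Proposition~\ref{LPB-EBC7} gives $\frac{d}{dt}\,\delta^2{\mc H}_C\big|_{(\omega_e,\Theta_e)}(\nu(t),\theta(t)) = 0$, i.e. $\|(\nu(t),\theta(t))\|$ is constant in $t$.

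From this the Lyapunov stability statement is essentially immediate: given $\varepsilon>0$, choose $\delta=\varepsilon$; if the initial perturbation satisfies $\|(\nu(0),\theta(0))\| < \delta$, then by conservation $\|(\nu(t),\theta(t))\| = \|(\nu(0),\theta(0))\| < \varepsilon$ for all $t\ge 0$. So the equilibrium is Lyapunov stable in the norm defined by minus the second variation, which is exactly the claim. I would state this as the core of the proof and then, for completeness, remark that because $-\,\delta^2{\mc H}_C$ controls (and is controlled by) the $H^{-1}$-norm of $\delta\omega$ and the $L^2$-norm of $\delta\Theta$ — using the Poincar\'e estimate \eqref{Poinc-estimate} for the upper bound on the $\Delta^{-1}$ term and positivity of the determinant of the $2\times 2$ symbol for the lower bound — stability also holds in these more standard norms.

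The one point that genuinely needs care, and which I expect to be the main obstacle, is the passage from "the quadratic form is negative definite" to "it defines a norm equivalent to a standard one," i.e. establishing the two-sided bound
\begin{align*}
c\left(\|\delta\omega\|_{H^{-1}}^2 + \|\delta\Theta\|_{L^2}^2\right)
\ \le\ -\,\delta^2{\mc H}_C\big|_{(\omega_e,\Theta_e)}(\delta\omega,\delta\Theta)
\ \le\ C\left(\|\delta\omega\|_{H^{-1}}^2 + \|\delta\Theta\|_{L^2}^2\right)
\end{align*}
for positive constants $c,C$. The upper bound is straightforward from boundedness of $\Psi'(\Theta_e)$, $\Phi''(\Theta_e)$, $\omega_e$ on the bounded domain $\mc D$ together with $\|\Delta^{-1}\delta\omega\|\lesssim \|\delta\omega\|_{H^{-1}}$. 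The lower bound is where the strict inequalities in \eqref{NegDefCon} do their work: after the Poincar\'e substitution \eqref{Poinc-estimate} one is reduced to the $2\times 2$ constant(-in-Fourier-mode) symbol
$\begin{pmatrix} -1/|\bm k|_{min}^2 & \Psi'(\Theta_e)\\ \Psi'(\Theta_e) & A\end{pmatrix}$, and one must check it is uniformly negative definite — negative trace and positive determinant with constants bounded away from $0$ — which follows precisely from $A<0$ and $A/|\bm k|_{min}^2 + (\Psi'(\Theta_e))^2>0$ being strict. Since $(\omega_e,\Theta_e)$ is a fixed smooth equilibrium these coefficients are fixed functions on a compact domain, so "strict" upgrades to "uniformly bounded away from the degenerate case," and the equivalence of norms — hence Lyapunov stability in a concrete topology — follows. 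I would close by citing \cite{HMRW1985} for the analogous argument in the closely related setting, noting that the EBC case differs only in the explicit form of the symbol.
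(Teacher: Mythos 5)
Your core argument is exactly the paper's proof: Proposition \ref{LPB-EBC7} makes $h(\nu,\theta)=\delta^2{\mc H}_C\big|_{(\omega_e,\Theta_e)}$ a conserved quantity of the linearised flow \eqref{SDP-EBC-LinEqn}, and the negative-definiteness hypothesis from Proposition \ref{LPB-EBC3} turns $-\,\delta^2{\mc H}_C\big|_{(\omega_e,\Theta_e)}$ into a norm, so conservation immediately yields Lyapunov stability in that metric, which is all the proposition asserts. The added norm-equivalence digression is not needed for the claim, and its key step is shaky as stated: negative definiteness of the reduced $2\times 2$ symbol requires its determinant $-A/|\bm{k}|_{min}^2-\big(\Psi'(\Theta_e)\big)^2$ to be positive, which does not follow from the inequality $A/|\bm{k}|_{min}^2+\big(\Psi'(\Theta_e)\big)^2>0$ you quote from \eqref{NegDefCon}, so if you keep that remark you should rework it rather than cite those inequalities verbatim.
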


\begin{proof}
This result follows because the quadratic form $h(\nu,\theta) = \delta^2 {\mc H}_C\big|_{(\omega_e,\Theta_e)}$ is preserved 
in the neighbourhood of $(\omega_e,\Theta_e)$. Hence, it can act as a norm for expressing linear Lyapunov stability. 
\end{proof}

\begin{remark}[Brief summary of EBC equilibrium properties]\rm
At this stage, we have characterised the critical-point equilibrium solutions of the EBC dynamics in \eqref{EBC-eqns-def1} and we have derived sufficient conditions  for their linear Lyapunov stability with respect to the second-variation metric defined in the neighbourhood of an equilibrium solution. Namely, initial perturbations that start in an infinitesimal neighbourhood of $(\omega_e,\Theta_e)$ defined by the metric quadratic form $-\,\delta^2 {\mc H}_C\big|_{(\omega_e,\Theta_e)}>0$ will stay in that neighbourhood under the evolution of the linearised equations. 

Of course, violation of a sufficient condition for stability is not necessary for linear instability. However, the method for determining the sufficient conditions  for the linear Lyapunov stability of critical-point equilibrium solutions introduces the second-variation metric defined in the neighbourhood of an equilibrium solution which turns out to be the Hamiltonian for the linearised equations in the neighbourhood of a critical-point equilibrium solution. The induced Hamiltonian structure of the linearised equations now opens the opportunity to investigate the connections between the second-variation sufficient conditions for linear Lyapunov instability and the spectral sufficient conditions for linear instability. For this linear-instability endeavour, we will apply relationships among the flow characteristics for the critical-point equilibrium solutions determined so far to cast the linearised equations into a form that admits direct physical interpretation. Perhaps not unexpectedly, these equations will turn out to be variants of well-known equations in the vast literature of investigations of linear instability of EBC equilibria.

\end{remark}

\medskip

\begin{theorem}
In the Boussinesq approximation, the linearised EBC equations satisfy the following Taylor-Goldstein equation for stratified, inviscid Euler flow,
\begin{align}
\big( c + U_e(z)\big) (\phi''(z)-k^2 \phi(z) ) -\, \frac{\phi(z)}{c + U_e(z)}
+  U_e''(z) \phi(z) = 0
\,.\label{EBC-LinEqn-phi-thm}
\end{align}
\end{theorem}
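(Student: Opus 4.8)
The plan is to reduce the linearised Hamiltonian equation \eqref{SDP-EBC-LinEqn} to normal modes. The first step is to rewrite \eqref{SDP-EBC-LinEqn} as a coupled pair of scalar equations for $\nu=\delta\omega$ and $\theta=\delta\Theta$, with $\overline\phi=\Delta^{-1}\nu=\delta\psi$, exploiting that at a critical-point equilibrium the profiles $\omega_e$, $\Theta_e$, $\psi_e$ --- and hence the coefficient functions $\Psi'(\Theta_e)$ and $A:=1+\Phi''(\Theta_e)+\omega_e\Psi''(\Theta_e)$ appearing in the second variation \eqref{2nd-varH} --- depend on $z$ alone. Then every Jacobian occurring in \eqref{SDP-EBC-LinEqn} has the form $J(f(z),g)=-f'(z)\,\partial_x g$, so the linearised system has $x$-independent coefficients and is separable in $x$ and $t$.

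Second, I would substitute the ansatz $\delta\psi=\phi(z)\,e^{ik(x-ct)}$, which forces $\delta\omega=(\phi''-k^2\phi)\,e^{ik(x-ct)}$ (since $\nu=\Delta\overline\phi$) and $\delta\Theta=\vartheta(z)\,e^{ik(x-ct)}$. Each $\partial_x$ becomes multiplication by $ik$ and $\partial_t$ by $-ikc$; dividing the two equations by $ik$ turns them into two coupled linear ODEs in $z$ involving only $\phi$, $\phi''-k^2\phi$, and $\vartheta$. From the transformed buoyancy equation one solves algebraically for $\vartheta$ in terms of $\phi$, using the equilibrium identity $\Psi'(\Theta_e)\,(d\Theta_e/dz)=U_e$ from \eqref{PhysEquilEqns2}; the result is $\vartheta$ proportional to $\phi/(c-U_e)$, with proportionality factor $d\Theta_e/dz$.

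Third, insert this expression for $\vartheta$ into the transformed vorticity equation to obtain a single second-order ODE for $\phi$. The crucial cancellation is that the coefficient multiplying $\vartheta$ there, namely $\omega_e'\,\Psi'(\Theta_e)+(d\Theta_e/dz)\,A$, collapses to a constant by the equilibrium relation \eqref{PhysEquilEqns3} (equivalently $(d\Theta_e/dz)\,A+\omega_e'\Psi'(\Theta_e)=1$, with $A$ as in \eqref{NegDefCon}); together with the shear relation $\omega_e'(z)=-U_e''(z)$ from Remark \ref{DeltaPhysInterp} and the normalisation of the background stratification $d\Theta_e/dz$ to unity (the ``Boussinesq approximation'' invoked in the statement, i.e. treating the dimensionless buoyancy-frequency factor as constant), this produces the Taylor--Goldstein equation \eqref{EBC-LinEqn-phi-thm}. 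The homogeneous boundary conditions $\phi(0)=\phi(H)=0$ inherited from $\bu\cdot\bs{\wh{n}}=0$ then turn \eqref{EBC-LinEqn-phi-thm} into a singular Sturm--Liouville eigenvalue problem for the complex phase speed $c$.

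The routine but delicate part --- and the place I expect to spend the most care --- is sign bookkeeping: the phase-speed convention in $e^{ik(x-ct)}$, the relation $\omega_e=-U_e'$ hence $\omega_e'=-U_e''$, and the fact that $\delta\omega=\Delta\delta\psi$ contributes $\phi''-k^2\phi$ rather than $\phi''$. I would cross-check the final form by linearising \eqref{EBC-eqns-def1} directly rather than through the Hamiltonian form \eqref{SDP-EBC-LinEqn}, and by verifying that in the shear-free limit $U_e\equiv 0$ one recovers the expected internal-gravity-wave dispersion relation $c^2(\phi''-k^2\phi)+(d\Theta_e/dz)\,\phi=0$ --- oscillatory ($c$ real) for a stably stratified background and exponentially growing for the convectively unstable one. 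The only genuinely substantive point is recognising that the equilibrium constraints \eqref{PhysEquilEqns2}--\eqref{PhysEquilEqns3} are exactly what is needed to eliminate $\vartheta$ and close the $\phi$-equation into the classical Taylor--Goldstein form.
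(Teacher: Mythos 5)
Your proposal follows essentially the same route as the paper's proof: reduce \eqref{SDP-EBC-LinEqn} to two scalar equations with $z$-dependent coefficients, insert the normal-mode ansatz $e^{ik(x-ct)}$ with $\nu=\Delta\overline{\phi}$, solve the buoyancy equation algebraically for the temperature amplitude, and substitute into the vorticity equation, closing via the equilibrium identities \eqref{PhysEquilEqns2}--\eqref{PhysEquilEqns3} and $\omega_e'=-U_e''$. The only caveat is the sign bookkeeping you yourself flag: with the paper's conventions the eliminated amplitude is $-\,\Theta_e'(z)\,\phi/(c+U_e)$ rather than your $\propto \phi/(c-U_e)$, which is exactly the detail needed to land on the stated $c+U_e$ form of \eqref{EBC-LinEqn-phi-thm}.
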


\begin{proof}
The proof begins by using the relations in \eqref{PhysEquilEqns2} to rewrite the linear EBC equations in \eqref{SDP-EBC-LinEqn},
\begin{align}
\begin{split}
\frac{\partial }{\partial t} 
\begin{bmatrix}
\nu \\ \theta
\end{bmatrix}
&=
-
\begin{bmatrix}
J(\omega_e(z),\,\cdot\,) & J(\Theta_e(z),\,\cdot\,) 
\\ 
J(\Theta_e(z),\,\cdot\,)  & 0
\end{bmatrix}
\begin{bmatrix}
\overline{\phi}+ \frac{U_e(z)}{d\Theta_e/dz}\theta
\\ 
\frac{U_e(z)}{d\Theta_e/dz}\nu +  \frac{ d\Theta_e/dz - U_e(z) d\omega_e/dz } { (d\Theta_e/dz)^2 }\theta
\end{bmatrix}
\\&=
-
\begin{bmatrix}
J(\omega_e(z),\,\cdot\,) & J(\Theta_e(z),\,\cdot\,) 
\\ 
J(\Theta_e(z),\,\cdot\,)  & 0
\end{bmatrix}
\begin{bmatrix}
\overline{\phi}+ B(z)\theta
\\ 
B(z)\nu +  A(z)\theta
\end{bmatrix}
\,,
\end{split}
\label{SDP-EBC-LinEqn-redux}
\end{align}
where $A(z)$ and $B(z)$ are defined in equation \eqref{PhysEquilEqns2}, and $J(a,b):= -a_z b_x + a_x b_z$. Consequently, the linearised equations may be written as
\begin{align}
\begin{split}
\frac{\partial \nu}{\partial t}  &= - \Big( J\big(\omega_e(z),\,\overline{\phi}+ B(z)\theta\,\big) 
+ J\big(\Theta_e(z),\,B(z)\nu +  A(z)\theta\,\big)  \Big)
\\&= \omega'_e(z) \partial_x\big(\,\overline{\phi}+ B(z)\theta\,\big) 
+ \Theta'_e(z)\partial_x \big(\,B(z)\nu +  A(z)\theta\,\big) 
\\
\frac{\partial \theta}{\partial t}  &= - \Big( J\big({\Theta}_e(z),\,\overline{\phi}+ B(z)\theta\,\big) \Big)
={\Theta}'_e(z) \partial_x\big(\,\overline{\phi}+ B(z)\theta\,\big) 
\,.\end{split}
\label{EBC-LinEqn-AB}
\end{align}
For the linear instability analysis, we separate variables as
\begin{align}
\begin{split}
\overline{\phi} &= e^{ik (x-ct)}\phi(z)\,,
\\
\theta &= e^{ik (x-ct)}\chi(z)\,.
\end{split}
\label{EBC-LinEqn-theta-chi}
\end{align}
Consequently, the $\nu$ variable also separates, as
\begin{align}
\nu(x,z,t) = \Delta\overline{\phi} = \big(\phi''(z) - k^2\phi(z)\big)e^{ik (x-ct)}
\,.
\label{EBC-LinEqn-nu}
\end{align}
Solving for $\chi$ in  the second equation of \eqref{EBC-LinEqn-AB} now yields
\begin{align}
\chi(z) = \frac{-\,{\Theta}_e'(z)}{c + {\Theta}_e'(z)B(z)}\phi(z) = \frac{-\,{\Theta}_e'(z)}{c + U_e(z)}\phi(z)
\,.\label{EBC-LinEqn-chi}
\end{align}
Substituting the expressions in \eqref{EBC-LinEqn-theta-chi} for $\overline{\phi}$ and $\theta$ into the first equation of \eqref{EBC-LinEqn-AB} and canceling a common factor of $ik e^{ik (x-ct)}$ throughout yields
\begin{align}
\big( c + {\Theta}_e'(z)B(z)\big) (\phi''(z)-k^2 \phi(z) ) + \big[ A(z) + \omega_e'(z) B(z) \big] \chi(z) 
+ \omega_e'(z) \phi(z) = 0
\,.
% \label{EBC-LinEqn-phi1}
\end{align}
By formulas \eqref{PhysEquilEqns2} and \eqref{PhysEquilEqns3} this becomes 
\begin{align}
\big( c + U_e(z)\big) (\phi''(z)-k^2 \phi(z) ) -\, \frac{\phi(z)}{c + U_e(z)}
+  U_e''(z) \phi(z) = 0
\,,\label{EBC-LinEqn-phi1}
\end{align}
as required.
\end{proof}
\begin{remark}\rm
See \cite{AH1987,AHMR1986,HMRW1985} and references therein for discussions of the Taylor-Goldstein equation 
for the stratified Euler fluid equation in the plane. 
\end{remark}
%\smallskip

%%%%%%%%%%%%%%%%%%%%%%%%%%%%%%%
%\todo[inline]{DH: Maybe Romeo and Dan can transfer some TQG analytical technology to EBC here!}
%%%%%%%%%%%%%%%%%%%%%%%%%%%%%%%

%\bigskip

%\newpage

{}

\end{document}